\documentclass[conference,onecolumn]{IEEEtran}
\IEEEoverridecommandlockouts
\PassOptionsToPackage{hyphens}{url}\usepackage{hyperref}
\usepackage{cite}
\usepackage{booktabs,url,enumitem}
\usepackage{amsmath,amssymb,amsfonts,amsthm}
\usepackage{algorithmic,cleveref,csquotes}
\usepackage{graphicx,pgfplots,pgfplotstable}
\pgfplotsset{compat=1.11}
\usepackage{textcomp}
\usepackage{xcolor}
\usepackage{tikz}
\usepackage{lipsum}
\usepackage{upgreek}
\usepackage[caption=false]{subfig}
\usetikzlibrary{automata,arrows,positioning,shapes}
\usetikzlibrary{decorations.text}
\usetikzlibrary{arrows.meta, shapes, positioning}
\def\BibTeX{{\rm B\kern-.05em{\sc i\kern-.025em b}\kern-.08em
    T\kern-.1667em\lower.7ex\hbox{E}\kern-.125emX}}

\setlength{\skip\footins}{8.25pt}
\renewcommand{\(}{\left(}
\renewcommand{\)}{\right)}

\newtheorem{theorem}{Theorem}

\newtheorem{lemma}[theorem]{Lemma}

\theoremstyle{definition} 
\newtheorem{definition}[theorem]{Definition}

\newtheorem{remark}[theorem]{Remark}

\newcommand{\epoch}{\ensuremath{e}\space}

\newcommand{\h}{h}
\newcommand{\qt}{\enquote}

\newcommand{\checkpointa}{\textnormal{s}}
\newcommand{\checkpointb}{\textnormal{t}}

\newcommand{\signature}{\ensuremath{\mathcal{S}}\space}

\newcommand{\rewardfactor}{\rho}
\newcommand{\baseinterest}{\gamma}

\newcommand{\validatorset}{V}
\newcommand{\basepenalty}{\beta}

\newcommand{\collectivereward}{C}
\newcommand{\deposit}{D}

\newcommand{\voteidc}{{\bf 1}}
\newcommand{\epochlength}{l}
\newcommand{\vldr}{\upnu}
\newcommand{\nvldr}{\tilde\upnu}

\newcommand{\vdsize}{\alpha}
\newcommand{\vtsize}{\mu}

\newcommand{\depsizefactor}{p}

\newcommand{\N}{\mathbb{N}}

\makeatletter
\g@addto@macro\normalsize{%
  \setlength\abovedisplayskip{4pt}
  \setlength\belowdisplayskip{4pt}
  \setlength\abovedisplayshortskip{4pt}
  \setlength\belowdisplayshortskip{4pt}
}

\begin{document}
\title{Incentives in Ethereum's Hybrid Casper Protocol\thanks{This work was supported in part by the National Research Foundation (NRF), Prime Minister's Office, Singapore, under its National Cybersecurity R\&D Programme (Award No. NRF2016NCR-NCR002-028) and administered by the National Cybersecurity R\&D Directorate. Georgios Piliouras acknowledges SUTD grant SRG ESD 2015 097, MOE AcRF Tier 2 Grant 2016-T2-1-170 and NRF 2018 Fellowship NRF-NRFF2018-07.}
}

\author{
\IEEEauthorblockN{Vitalik Buterin\IEEEauthorrefmark{1}, Dani\"el Reijsbergen\IEEEauthorrefmark{2}, Stefanos Leonardos\IEEEauthorrefmark{2},  Georgios Piliouras\IEEEauthorrefmark{2}}
    \IEEEauthorblockA{\IEEEauthorrefmark{1}Ethereum Foundation}
    \IEEEauthorblockA{\IEEEauthorrefmark{2}Singapore University of Technology and Design}
}

\maketitle

\begin{abstract} 
We present an overview of \emph{hybrid} Casper the Friendly Finality Gadget (FFG): a Proof-of-Stake checkpointing protocol overlaid onto Ethereum's Proof-of-Work blockchain. We describe its core functionalities and reward scheme, and explore its properties. Our findings indicate that Casper's implemented incentives mechanism ensures liveness, while providing safety guarantees that improve over standard Proof-of-Work protocols. Based on a minimal-impact implementation of the protocol as a smart contract on the blockchain, we discuss additional issues related to parametrisation, funding, throughput and network overhead and detect potential limitations.
\end{abstract}

\begin{IEEEkeywords}
Proof of Stake, Ethereum, Consensus
\end{IEEEkeywords}

\section{Introduction}\label{sec:introduction}

In 2008, the seminal Bitcoin paper by Satoshi Nakamoto \cite{nakamoto2008bitcoin} introduced the \emph{blockchain} as a means for an open network to extend and reach consensus about a distributed ledger of digital token transfers.
The main innovation of Ethereum \cite{buterin2014next} was to use the blockchain to maintain a history of code creation and execution.
As such, Ethereum functions as a \emph{global computer} that executes code uploaded by users in the form of \emph{smart contracts}.
Like Bitcoin \cite{Ga15,Ga17}, Ethereum's block proposal mechanism is based on the concept of Proof-of-Work (PoW).
In PoW, network participants utilise computational power to win the right to add blocks to the blockchain.
However, the alarming global energy consumption of PoW-based blockchains has made the concept increasingly controversial \cite{Sw18,digiconomist,Pre20}.
One of the main alternatives to PoW is \emph{virtual mining} or \emph{Proof-of-Stake} (PoS) \cite{Bon15,bentov2016pos,firstproofofstake,Vot20}. In PoS, the right to propose a block is earned by locking -- or \emph{depositing} -- tokens on the blockchain, which has no inherent energy cost. \par
As part of its long-term goal to switch from PoW to PoS, Ethereum is designing a full PoS protocol called \emph{Casper} \cite{ryan2018casper,ryan7,Bu1810}. To ensure a smooth transition with minimal impact on its users and the Ether price (Ethereum's native cryptocurrency \cite{Kok20}), Ethereum deployed and tested a \emph{hybrid} version, \emph{Casper the Friendly Finality Gadget} or Casper FFG, \emph{as a smart contract} on a dedicated testnet \cite{caspercontract,Pe17,Pe18}. Essentially, Casper FFG is a simplified version of a Byzantine fault tolerant protocol \cite{castro1999practical}, with \qt{votes} for checkpoints taking the place of \emph{prepares} and \emph{commits}. In contrast to protocols that treat every block as a checkpoint (e.g., Tendermint \cite{kwon2014tendermint}
and Algorand \cite{chen2016algorand}),
 Casper FFG periodically checkpoints blocks on an underlying chain. As such, the tried and tested PoW chain can be preserved during a transitional phase, whilst the extra load on the network is mitigated. This addresses two of the classical challenges that affect PoS protocols \cite{Li17,Br18}: the \emph{nothing-at-stake} problem through the \emph{slashing} mechanism that penalises misbehaving violators \cite{Bu1811}, and \emph{long-range attacks} through a modified \emph{fork-choice rule} that prioritises (and never automatically reverts) finalised checkpoints over PoW \cite{ryan2018eip}.\par
The high-level idea and fundamental properties of the hybrid Casper FFG have been outlined in \cite{Bu1811}. In the present paper we extend \cite{Bu1811} to include a full description of the implemented incentives (reward--penalty) scheme. We show that the scheme is incentive-compatible in the sense that participants are incentivised to follow the protocol, and we investigate its impact on the basic properties of liveness and safety. Based on the minimal-impact implementation of Casper FFG as a smart contract on the PoW chain, the present paper also covers the parameter choice, confirmation times, and network overhead, and a discussion of potential limitations. Along with the lessons learned, we present some directions, ideas, and design concepts for future development of Ethereum's consensus mechanism and blockchains in general.\par
The contributions of this paper are as follows. We first provide an overview of the Casper FFG protocol and describe its core functionalities. To reason about liveness and safety, we develop a mathematical framework for the incentives scheme, slashing conditions, and the fork-choice rule. Our first theoretical result is that with the implemented reward scheme, Casper's checkpointing scheme is $\alpha$-live, for any $\alpha \in \(0,1\right]$, i.e., online validators controlling any fraction $\alpha\in \(0,1\right]$ of the stake will be eventually able to finalise checkpoints. Concerning safety, we first show that the property proved in \cite{Bu1811} carries over to the updated incentives scheme, namely that two or more \qt{conflicting} checkpoints can only exist if validators controlling at least $1/3$ of the stake misbehave conspicuously and hence, can be slashed (i.e., punished). We then extend the safety analysis to the case of a protracted network partition. It is clear from the CAP theorem \cite{gilbert2002brewer} that during a network partition, a distributed ledger can satisfy liveness or safety, but not both. We indeed find that the liveness guarantee has implications for safety, as it eventually allows for conflicting checkpoints to be finalised. However, using both numerical and analytical tools, we derive that the minimum duration of such a network partition is very large (i.e., at least three weeks). As such, Casper's checkpointing protocol prioritises safety in the short term, but liveness in the long term, and hence strikes a balance between protocols that either always prioritise liveness (e.g., the underlying PoW chain) or safety (e.g., Tendermint). Next, we study the incentive compatibility of the protocol by investigating whether it is profitable for nodes to follow the protocol. Finally, we turn to the implementation of Casper FFG as a PoW chain contract. In this approach, the stakeholders' deposits are stored as variables in the contract, and their actions appear on the blockchain as contract calls. In \Cref{sec:implementation}, we evaluate the effect on transaction throughput and computational cost (or load) as measured via \emph{gas}, discuss the impact of different parametrisations, and identify potential limitations. \par
To remain compatible with Ethereum's evolution towards a \emph{sharded} and hence more scalable design \cite{ryan2018casper,buterin2018cbc,shardingroadmap}, the specifications of Casper FFG are constantly updated \cite{Bu18,Bu1810}. Yet, the key takeaways and the main findings from testing Casper FFG as a smart contract in a hybrid setting carry over to the currently developed design of a main PoS chain, called the \emph{Beacon Chain} \cite{Bu1810,buterin2018beacon}, that will coordinate consensus among several side chains or \emph{shards}. In particular, the main components, i.e., the incentives mechanism, functionality, and design philosophy remain basic components of Casper FFG even in the sharded construct (i.e., multi-chain) setting \cite{buterin2018beacon,edgington2018,Bu1810}. Although we focus on combining Casper FFG with Ethereum's PoW chain, the protocol can be overlaid on top of any chain-based blockchain -- PoW or PoS \cite{ryan2018eip} -- and may therefore be of wider interest to the blockchain community. 

\subsubsection*{Outline} In \Cref{sec:casper}, we provide an abstract overview of Casper FFG and its operations in a formal mathematical model. \Cref{sec:analysis} contains our main theoretical results on liveness, safety, and incentive compatibility. In \Cref{sec:implementation}, we present our findings on Casper's implementation and discuss related issues. We summarise our results in \Cref{sec:conclusions}.
  

\section{The Hybrid Casper FFG Protocol}\label{sec:casper}

\subsection{The PoW Chain}\label{sub:powchain}

Ethereum functions as a global computer whose operations are replicated across a peer-to-peer network. 
Participants in the network are called \emph{nodes} -- they typically interact with the rest of the network via a software application called a \emph{client}.
The client interacts with the Ethereum blockchain via \emph{transactions}. There are three main types of transactions:
\begin{description}[leftmargin=0cm]
\item[Token transfers] provide the same core functionality as Bitcoin by allowing nodes to exchange digital tokens.
\item[Contract creations] upload pieces of code, called \emph{(smart) contracts}, to the blockchain. Contracts are executed using a runtime environment called the Ethereum Virtual Machine (EVM).\footnote{The eWASM framework \cite{ewasm} may replace the EVM in the future.} Two notable high-level languages that compile into the EVM are \href{https://solidity.readthedocs.io}{Solidity} and \href{https://github.com/ethereum/vyper}{Vyper} which are based on the JavaScript and Python languages, respectively. Vyper is particularly relevant for this paper as it is used for the Casper contract. A typical contract will include one or more \emph{functions} which can be called by the nodes. 
\item[Contract calls] handle interactions with the functions of an existing contract\footnote{For more details on these notions see \Cref{sub:messages}.}.
\end{description}
In PoW, the ordering of these transactions in the blockchain is determined by a special class of nodes called \emph{miners}. Miners collect and sort transactions, after which they execute these transactions in the EVM. Upon completion, information about the state of the complete global computer (account balances, contract variables, etc.) is then combined with the transactions and various other data into a data structure called a \emph{block}. Miners compete to find a block that satisfies a condition that requires considerable computational effort. The winning miner receives a fixed amount of Ether (ETH) -- Ethereum's native currency -- in the form of a mining reward. Additionally, all of the three transaction types listed above require \emph{gas}, which is also paid to the winning miner as a reward for the computational effort. In particular, more computationally expensive operations in the EVM require more gas (see \cite{wood2014ethereum} for a complete specification of the gas cost for the different types of operations) -- as such, gas cost is a good measure for the computational `cost' of a transaction. In Section~\ref{sec:implementation} we will investigate the gas cost of the different functions in the Casper contract.

\subsubsection*{Formal Framework}
To reason about the evolution of the PoW chain in the context of network latency and partitions, we require a formal framework. Let $N$ denote the full set of nodes as identified by an integer denoting their index in the network, i.e., $N \subseteq\mathbb N$. At each time $t\geq0$, each node $n\in N$ is aware of a set of blocks $B$ which we denote by $\mathcal B_n^t$, i.e.,
\begin{align*}
\mathcal B^t_n:=\{& B:\text{node $n$ is aware of $B$ at time $t\ge0$}\}.
\end{align*}
The \emph{genesis} block $g$ is the only block that all nodes are aware of at time $0$, i.e., $\mathcal{B}^0_n=\{g\}$ for all $n\in N$. Each block $B$ can be represented uniquely as an integer, e.g., via the hash of its header, although this means that the range of possible blocks is \qt{restricted to} $\{0, \ldots, 2^{256}-1\}$. 
Due to network latency, eclipse attacks, or other reasons, it may be the case that different nodes are aware of different sets of blocks, i.e., there may exist $t>0$ and nodes $n,m\in N$ such that $\mathcal B^t_n\neq \mathcal{B}^t_m$. We assume that nodes cannot \emph{forget} about blocks, i.e.,
\[\mathcal B_n^t\subseteq \mathcal B_n^s, \quad \text{for any $s,t>0$ with $s>t$}.\]
Each block $B$ points to a previous block $P\(B\)$ via a function \mbox{$P:\N \rightarrow \N$}, cf. \cite{Br18}, with 
\begin{itemize}[leftmargin=*]
\item $P^0\(B\):=B$ for any block $B$,
\item $P^2\(B\) = P\(P\(B\)\), P^3\(B\) = P\(P\(P\(B\)\)\)$ etc. 
\item $P^k\(g\):=\emptyset$ for all $k\ge1$, where $g$ is the genesis block,
\item $P^k\(B\)\neq B$ for any $B$ and $k\in \{1,2,\ldots\}$, i.e., there are no cycles.\footnote{An attacker could theoretically construct a cycle, but the probability of succeeding is negligible: i.e., after picking a previous hash, the next block would need to have that exact hash, which occurs with probability $<10^{-70}$.}
\end{itemize}
Based on this relationship, we define the \emph{chain $C\(B\)$ of a block $B$} as the path from $B$ leading back to $g$, i.e., $C\(B\):=\(B,P\(B\),\dots, P^{k-1}\(B\),g\)$. If $B'\neq B$ is a block such that $B'\in C\(B\)$ then $B'$ is called an \emph{ancestor} of $B$ and $B$ is called a \emph{child} of $B'$. If $B'=P\(B\)$, then $B$ is called a \emph{direct} child of $B'$. The length $k$ of $C\(B\)$ determines the \emph{block height $h\(B\)$} of block $B$, i.e., $h\(B\)=k$ if $P^k\(B\)=g$. The height of the genesis block is $0$, i.e., $h\(g\)=0$. The \emph{state of the blockchain} in block $B$ is obtained by executing all transactions in $C\(B\)$ starting from the genesis block $g$, see also \cite{Leo16}.

A \emph{fork} occurs whenever two different blocks $A$ and $B$ exist such that $P(A) = P(B)$. At any time $t$, each miner $n$ needs to decide which block in $\mathcal{B}^t_n$ to extend. This is done using a \emph{fork-choice rule}, which in its simplest form is a function $f$ that maps a set $\mathcal{B}$ to a single block $B$. The block chosen is called the \emph{head}. In PoW, the fork-choice rule is to select the block $B$ with the highest accumulated proof-of-work, and in case of ties to prefer the block seen first. In the absence of serious attacks or network partitions \cite{Ey14,Sa17,Zu18}, a naive measure of a chain's PoW is the height $h\(B\)$ of a block. Hybrid Casper FFG's fork-choice rule is discussed in the next section.


\subsection{Execution of Hybrid Casper FFG}\label{sub:execution}

\subsubsection*{Validators}

In hybrid Casper FFG, some nodes assume the role of \emph{validators}. Nodes can become validators by locking/staking tokens on the PoW chain, thus creating a \emph{deposit}. In the Casper contract, this is done by calling the {\tt deposit} and {\tt withdraw} functions. Additionally, the {\tt logout} function removes a validator from the active validator set and needs to be called before withdrawing. Validators need to wait a long period after depositing before being allowed to withdraw. In the benchmark setting \cite{ryan2018eip}, this is 15000 epochs, i.e., around 120 days.

\subsubsection*{Checkpoints}

The central role of the validators is to \emph{vote} for checkpoints\cite{Da18}. A \emph{checkpoint} is any block with block number $i \cdot \epochlength$, where $i \in \{0,1,\ldots\}$ and $\epochlength \in \mathbb{N}$ denotes the \emph{epoch length}: an \emph{epoch} is defined as the contiguous sequence of blocks between two checkpoints, including the first but not the latter. Block $0$ (which is also a checkpoint) denotes the genesis block. We will assume $\epochlength = 50$ throughout this paper \cite{ryan2018eip}. The \textit{epoch of a block} is the index of the epoch containing that hash, e.g., the epoch of blocks 200 and 249 is 4. Because validators' deposits change between epochs due to the rewards and penalties we denote by $\deposit_{\vldr,i}$ the deposit of validator $\vldr \in V$ at the beginning of epoch $i\in\mathbb N$, where $V$ denotes the set of \emph{active validators} at epoch. Deposits also change between blocks but since withdrawals and deposits only occur at the end of epochs, it is sufficient to track the deposits at those points. In practice, the validator set is dynamic and changes over time too. For the present analysis it suffices to assume a fixed validator set. We denote by $\deposit_i:=\sum_{\vldr\in \validatorset}\deposit_{\vldr,i}$ the total stake (deposited value) at epoch $i$.

\subsubsection*{Votes in Casper}
In the Casper contract, voting is done by calling the {\tt vote} function with the arguments $\langle \vldr, \checkpointb,  \h(\checkpointb)$, $\h(\checkpointa), \signature \rangle$, where the entries are explained in Table~\ref{tbl:vote}.
\begin{table}[!ht]
\centering
\begin{tabular}{@{}l l@{}}
	\midrule
	\textbf{Notation} & \textbf{Description} \\ \midrule
	$\vldr$ & validator index \\
	\checkpointb & hash of the `target' checkpoint \\
	\h(\checkpointb) & height of the target checkpoint in the checkpoint tree \\
	\h(\checkpointa) & height of the source checkpoint in the checkpoint tree \\
	\signature & signature of $( \vldr, \checkpointb,  \h(\checkpointb), \h(\checkpointa) )$ by the validator's private key \\
	\midrule
\end{tabular} 
\vskip0.1cm
\caption{Summary of {\tt vote} function arguments.} \label{tbl:vote}
\end{table}
We will say that a validator $\vldr$ who generates a vote transaction to the Casper address votes \emph{correctly} or casts a \emph{valid} vote if $\vldr$ includes the expected source and target epochs (checkpoints) returned to $\vldr$ by a Casper contract call and a valid signature created by $\vldr$'s private key \cite{Da18,ryan2018eip}. In particular, the only valid target epoch is the current epoch (in the contract) and the only valid target for a vote that appears in block $B$ is the checkpoint block with the correct height \emph{on the chain} $C\(B\)$. Hence, the correctness of a vote message depends on the chain that a node is voting on.\footnote{We address this case in more detail in \Cref{sec:analysis}.}

\subsection{Finalisation}\label{sub:finalisation}
A {checkpoint $\checkpointb$} is \emph{justified} if at least $2/3$ of the validators -- in terms of stake -- publish a vote of the form $\langle \vldr, \checkpointb, \text{\h(\checkpointb), \h(\checkpointa)}, \signature \rangle$ for some checkpoint $\checkpointa$, such that $\checkpointa$ is an ancestor of $\checkpointb$ and is itself justified. A checkpoint $\checkpointa$ is \emph{finalised} on a chain if it is justified and the checkpoint $\checkpointb$ that immediately follows $\checkpointa$ on that chain is also justified -- i.e., if $\checkpointb = P^{\epochlength}\(\checkpointa\)$ and $\checkpointa,\checkpointb$ are both justified, then $\checkpointa$ is finalised. We consider $g$ to be both justified and finalised, to serve as the base case for the recursive definition. Two finalised checkpoints $\checkpointa, \checkpointb$ are \emph{conflicting} if neither is an ancestor of the other. 
When handling vote transactions, the Casper contract only processes (i.e., includes) \emph{correct} votes as defined above. Votes for the \emph{wrong} targets and target epochs are considered invalid and ignored. A typical execution of the Casper protocol is illustrated in \Cref{fig:variables}.
\begin{figure*}[!htb]
\centering
\includegraphics{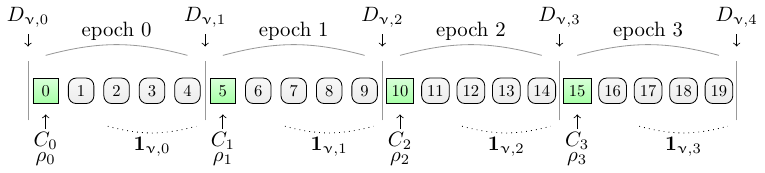}
\caption{Illustration of the points in the blockchain where the different variables are updated. Blocks are mined by miners on the underlying PoW blockchain. Validators vote on checkpoints -- which are highlighted in green -- in the later part of the epoch (which have length $l=5$ for illustrative purposes). Correct votes form a link between the previous two checkpoints. For instance, correct votes in epoch 3 have as source checkpoint $10$ and target checkpoint $15$.}
\label{fig:variables}
\end{figure*}
The notion of finality is a central primitive in the design of Casper and before proceeding with the rest of the definitions, we elaborate on its relationship with the typical notions of finality in PoW blockchains. 

\subsubsection*{Finality in Proof of Work}
Technically, a PoW blockchain never allows a transaction to truly be \qt{finalised}: for any given block $B$ with chain $C\(B\)$, there is always a positive probability that a miner will create a longer chain that starts from an earlier block $B'$, i.e., $B'\in C\(B\)$ with $B'\neq B$, and which does not include $B$. However, although the probability is positive, it decreases exponentially in the distance $k$ between $B$ and $B'$, i.e., as expressed via $B'=P^k\(B\)$ \cite{Ga15}. This argument was first made in the Bitcoin white paper based on the solution to the Gambler's Ruin problem \cite{Na08}. In particular, the chain of an attacker who controls less than a fraction $\alpha$ of the network hash power (mining power), for any $\alpha \in (0,0.5)$, can be shown to have a probability of at most $\(\alpha/\(1-\alpha\)\)^k$ to overtake the original chain if the original chain is $k$ blocks longer when the attacker launches the attack. By choosing a sufficiently high $k$, the probability of reverting a block can be reduced to practically theoretical levels \cite{Si16}, cf. \Cref{fig:pow_finality}.
\begin{figure}[!htb]
\centering
\includegraphics{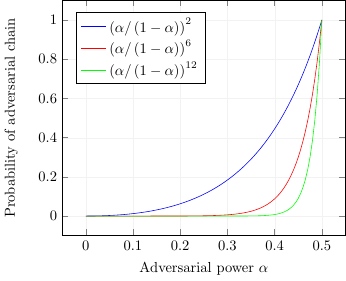}
\caption{Probability that the chain of an adversary controlling a fraction $\alpha$ of the total mining power will overtake a chain of length $k$ for $k=2,6,12$. The probability quickly becomes negligible as $k$ increases, yielding strong probabilistic finality even in PoW chains.}
\label{fig:pow_finality}
\end{figure}
Hence, a strong notion of finality can be argued to exist even on PoW blockchains \cite{Si16}. However, this argument assumes that more than half of the nodes behave honestly, and that they are coordinated on the same chain. There are a multitude of scenarios in which this assumption is invalid, e.g., due to high network latency, eclipse attacks \cite{heilman2015eclipse}, or the sudden influx of (possibly malicious) mining power \cite{Bo15,bonneau2018hostile}. The latter scenario is a particularly big concern for smaller PoW blockchains, which can be overwhelmed by mining power that is normally used on bigger blockchain platforms. One recent example is the attack on Ethereum Classic that occurred in January 2019 -- according to engineers at Coinbase, a prominent cryptocurrency exchange, one block with as many as 122 confirmations was overturned \cite{nesbitt2019}. We can therefore conclude that even though PoW's probabilisitic finality provides strong guarantees under normal conditions, such conditions cannot always be guaranteed in practice.

\subsubsection*{Finality in Casper FFG}\label{sub:finality_casper}
The Casper protocol is intended to offer stronger finality guarantees than PoW in both the hybrid PoW/PoS and ultimately in the pure PoS settings. The key insight is that in PoS, nodes must make a deposit to become validators, and their messages that appear on the blockchain can therefore be linked to those deposits. If the users engage in clear misbehaviour, e.g., by voting for conflicting checkpoints, then they can be punished by slashing their deposits. The equivalent notion in PoW would be that the mining hardware of adversarial miners is destroyed. In our setting, the worst case of misbehaviour is to vote for conflicting checkpoints. If $2/3$ of validators are placing their stakes behind their votes for a checkpoint, then if another $2/3$ of validators place their stakes behind a contradictory checkpoint, then that necessarily implies that the intersection, i.e. at least $1/3$ of validators, have been backing both of the conflicting checkpoints. We still cannot guarantee absolute finality -- if conflicting checkpoints are finalised due to validator misbehaviour, then either the chain permanently forks or some off-chain governance mechanism is used to decide in favour of one branch, at the expense of the other. However, we have an \textit{economic} form of finality that guarantees that a finalised block is never overturned \textit{under the condition that} no group of validators is willing to suffer a financial loss of at least $1/3$ of the total value of the staked deposits. In PoW, it is impossible to put concrete bounds on the required financial cost to overturn a finalised block, although estimates can be made based on the cost to rent mining power \cite{bonneau2018hostile}.

As we will see in Section~\ref{sec:analysis}, there is one other scenario in which finality in Casper FFG can theoretically be compromised, namely due to liveness recovery in more than one branch during a serious, protracted network partition. However, we will show that the minimum required time duration of such a partition is extremely long, namely over three weeks. In reality, such a partition would require that a significant fraction of validators is not able to connect to the rest of the network, even via satellite, for several weeks, yet that they are able to communicate amongst themselves to vote on each other's blocks. Even in such an unprecedented case, it may be possible to retroactively compensate deposit holders on the losing branch through a hard fork coordinated by an off-chain governance mechanism (e.g., as in the case of the DAO hard fork). Hence, we do not consider this scenario to be a realistic threat to block finality.



\subsubsection*{Fork-Choice Rule}
The Hybrid Casper FFG fork-choice rule extends the PoW fork-choice rule of Section~\ref{sub:powchain}. In particular, to select a block as the head of the chain, a client queries the contract to find the block(s) with the \emph{highest justified epoch}, prioritising the block with the \emph{highest mining PoW} only in a case of tie. Clients only consider epochs in which the total deposit exceeds a given minimum threshold and never revert a \emph{known finalised} checkpoint. If a chain has no justified checkpoints beyond $g$, the fork-choice rule essentially reverts to pure PoW \cite{ryan2018eip}.

\subsection{Rewards and Penalties Mechanism}\label{sub:rewards}

%

According to Casper's payments scheme, validators who vote correctly during an epoch are rewarded, and validators who do not are penalised. This is achieved by adjusting the validators' deposits depending on their own voting behaviour, i.e., on whether they are voting or not, and on the performance of the protocol as a whole, i.e., on what total fraction of validators vote correctly and whether that is enough to justify and finalise checkpoints. Specifically,
\begin{description}[leftmargin=0cm]
\item[Correct voting:] If checkpoints are being finalised, i.e., if at least $2/3$ of validators are voting correctly, then deposits of correctly voting validators increase by a positive interest rate. If checkpoints are not being finalised, deposits of correct voting validators remain the same. The interest rate depends on the total deposit and how many other validators are voting. 
\item[Non-voting:] Regardless of finalisation, non-voting validators are penalised and their deposits shrink. The penalties are increasing in the proportion of non-voting validators. If epochs fail to be finalised during sustained time periods, then the penalties gradually become  more severe. 
\item[Conflicting/incorrect voting:] Incorrect votes are ignored and validators who cast them are considered as non-voters and are not rewarded. If evidence is provided that validators cast conflicting votes, then their deposits are partially or entirely removed (slashed) depending on the severity of the violation and the overall protocol performance (i.e., the proportion of \qt{correct} voters and whether epochs are being finalised or not). 
\end{description}
Incorrect or missing votes are not punished as harshly as conflicting votes, as there are several faulty behaviours that can cause a validator $\vldr$ to fail to produce a valid vote, as we discuss below.
\begin{enumerate}
	\item A valid vote by $\vldr$ for the target epoch either never makes it onto the blockchain, or only outside the target epoch. This could be because
	\begin{enumerate}[noitemsep]
		\item $\vldr$ never cast the vote, or was too late when doing so, \label{it:novote}
		\item a majority of the other validators are censoring $\vldr$, \label{it:validators}
		\item a significant portion of the nodes decided not to propagate $\vldr$'s vote, or \label{it:nodes}
		\item the network latency was too severe. \label{it:latency}
	\end{enumerate}
	\item A vote by $\vldr$ does make it onto a block inside the target epoch, but it is otherwise invalid. This could be because
	\begin{enumerate}[noitemsep]
		\item the signature is invalid, \label{it:signature}
		\item the source epoch does not use the last justified epoch, or \label{it:source}
		\item the target hash is invalid. \label{it:target}
	\end{enumerate}
\end{enumerate}

In case \ref{it:novote}, the fault clearly lies with $\vldr$. However, in cases \ref{it:validators} and \ref{it:nodes} the fault lies with (a coalition of) malicious validators or other nodes respectively. In case \ref{it:latency}, the fault does not even necessarily lie with participants in the blockchain. In case~\ref{it:signature}, this could have been due to a fault by $\vldr$, or by one or more malicious or malfunctioning nodes tampering with $\vldr$'s message before propagating. In case~\ref{it:source}, this is either the result of a mistake by $\vldr$, a safety fault (in which a previously finalised epoch is overturned), or because $\vldr$ had an incomplete view of the network. In case~\ref{it:target}, the blockchain must have been forked when $\vldr$ voted --- again, this could have been caused by the network, a mistake by $\vldr$, misbehaving nodes (who blocked $\vldr$'s view of the network), misbehaving miners (who were withholding blocks from $\vldr$), or a misbehaving coalition of validators (who finalised a chain that was different from the one preferred by the network).\par
The difficulty with fault attribution in the case of missing/incorrect votes (which in general suffers from the problem of \textit{speaker/listener fault equivalence} \cite{buterin2017triangle}) creates a tension between \textit{preventing harm} and \textit{fairness}, i.e., between sufficiently penalising malicious validators and adequately incentivising honest validators to participate. In Casper's implementation, these adjustments are achieved via a scheme of reward factors, cf. \eqref{eq:rewardfactor} and \eqref{eq:collective} that properly update validator's deposits, cf. \eqref{eq:interest}. This payment scheme is designed to make the protocol \emph{incentive-compatible}, i.e., to encourage validators to vote correctly and as often as possible, and enforces the protocol's purposes of \emph{liveness} and \emph{safety}, as we discuss in more detail in \Cref{sec:analysis}. 


\subsubsection*{The Scheme}
\label{sub:scheme}
In detail, the Casper contract adjusts validators' deposits via the \emph{individual} and the \emph{collective} reward factors, $\rewardfactor_i$ and $\collectivereward_i$ respectively, that depend on each validator's voting behaviour and on the aggregate protocol functionality, i.e., on whether epochs are finalised or not, for every epoch $i\in \mathbb N$. Specifically, let ESF$_i$ denote the \emph{Epochs Since Finalisation} during epoch $i$, defined as $i$ minus the height of the last finalised epoch. Since $g$ is finalised, ESF$_0 = 0$ and ESF$_1=1$. For $i\ge2$, ESF$_i$ always equals $2$ in the ideal situation where everyone always votes (it requires two consecutive epochs to be justified for the first of them to be finalised), otherwise it is higher. Based on the total stake $\deposit_i$ in epoch $i\in \mathbb N$, the \emph{individual reward factor} $\rho_i$, is then defined as follows 
\begin{equation}\label{eq:rewardfactor}
\rewardfactor_i :=\begin{cases} \baseinterest\cdot\deposit_i^{-\depsizefactor} + \basepenalty \cdot \(\text{ESF}_i-2\), & \text{if }\deposit_i>0 \\0, &\text{otherwise}\end{cases}
\end{equation}
where $\baseinterest,\depsizefactor,\basepenalty>0$ denote the \emph{base interest}, \emph{total deposit dependence} and \emph{base penalty} factors, respectively. In the benchmark specification\cite{ryan2018eip}, the parameters were set as \mbox{$\baseinterest = 7\cdot10^{-3}$}, \mbox{$\basepenalty=2\cdot10^{-7}$},  and $\depsizefactor = 1/2$. Here, $\depsizefactor$ was chosen to strike a balance between $\depsizefactor=0$ (i.e., constant interest rate per validator) and $\depsizefactor=1$ (i.e., constant total amount of interest paid out by the protocol). In particular, given $\depsizefactor=1/2$, $\baseinterest$ and $\basepenalty$ were derived by reverse-engineering the constants from two desired outcomes: assuming that 10M ETH has been deposited, i) validators earn $\approx$ 5\% annual interest if everyone (nearly) always votes, and ii) if 50\% of the validators go offline, they lose 50\% of their deposits in 21 days\footnote{The parameter choice will be discussed further in Section~\ref{sec:implementation} and \Cref{sub:choice} for more details on their values and choice.}.\par
To define the \emph{collective} reward factor, let
\[\voteidc_{\vldr,i} = \begin{cases} 1 & \text{if $\vldr$ voted successfully during epoch $i$,}\\ 0 &\text{otherwise.}\end{cases}\]
At the end of epoch $i$, let $m_i$ denote the weighted fraction of correctly voting validators in epoch $i$, defined as $m_i := \frac1{\deposit_i}\textstyle\sum_{\vldr \in \validatorset} \voteidc_{\vldr,i}\deposit_{\vldr,i}$. The \emph{collective} reward factor, $\collectivereward_i$, at the beginning of epoch $i$ is defined as
\begin{equation}\label{eq:collective}
\collectivereward_i := \begin{cases} \frac{1}{2} m_{i} \rewardfactor_{i}, & \text{if ESF}_{i} = 2, \\ 0, & \text{otherwise.} \end{cases}
\end{equation}
Validators' deposits are then updated via the scheme
\begin{equation}
\deposit_{\vldr,i+1} :=  (1 + \collectivereward_{i}) \frac{1 + \voteidc_{\vldr,i}\rho_{i}}{1+\rho_{i}}\deposit_{\vldr,i}. \label{eq:interest}
\end{equation}
The intuition behind the updates in validators' deposits $\deposit_{\vldr,i}$ in \eqref{eq:interest} and the factors \eqref{eq:rewardfactor} and \eqref{eq:collective} that are used is the following.
\begin{description}[leftmargin=0cm]
\item[Case 1: ESF$_i>2$.] In this case, epochs are not being finalised properly, i.e., there exists a fraction of more than $1/3$ of validators that are not voting as expected. Accordingly, $C_i=0$ and hence \[\deposit_{\vldr,i+1} = \frac{1 + \voteidc_{\vldr,i}\rho_{i}}{1+\rho_{i}}\deposit_{\vldr,i}.\]
Obviously, the actual value of $\deposit_{\vldr,i+1}$ now depends on the value of $\mathbf 1_{\vldr,i}$, i.e., on whether validator $\vldr$ voted successfully during epoch $i$ or not. 
\begin{description}[itemindent=0cm]
\item[Case 1a: $\mathbf 1_{\vldr,i}=1$.] In this case, validator $\vldr$ voted successfully and $\deposit_{\vldr,i+1}= \deposit_{\vldr,i}$, i.e., validator $\vldr$'s deposit remains the same. In other words, if checkpoints are not finalised but a validator keeps voting successfully, then their deposits remain unharmed. 
\item[Case 1b: $\mathbf 1_{\vldr,i}=0$.] In this case, validator $\vldr$ did not vote successfully which means that they essentially contribute with their faulty behavior in checkpoints not being finalised. According to \eqref{eq:interest}, $\deposit_{\vldr,i+1}= \frac{1}{1+\rho_i}\deposit_{\vldr,i}$, and hence, since $\rho_i>0$, cf. \eqref{eq:rewardfactor}, the non-voting validator $\vldr$'s deposit will shrink. However, the exact intensity depends on the value of $\rho_i$ and the selection of parameters $\gamma,p$ and in particular, by the base penalty factor, $\beta$. Specifically, since ESF$_i>2$, a larger $\beta$ \qt{penalises} the non-voting validator $\vldr$ by increasing $\rho_i$. Additionally, the penalties further increase as ESF$_i$ increases, i.e., as the abnormal network conditions (non-finalization of checkpoints) persist. 
\end{description}
\item[Case 2: ESF$_i=2$.] In this case, checkpoints are finalised as expected and possibly only a minority of validators -- less than $1/3$ in terms of their stake -- are not voting as expected. This implies that $\rho_i=\gamma \cdot D_i^{-p}$, i.e., that the individual reward factor for validator $\vldr$ depends now only on the base interest $\gamma$, the total dependence $p$ and the total staked deposit at epoch $i$, $D_i$. Again, the exact effect on validator $\vldr$'s deposit depends on whether $\vldr$ voted successfully during epoch $i$ or not.
\begin{description}[itemindent=0cm]
\item[Case 2a: $\mathbf 1_{\vldr,i}=1$.] In this case, $\deposit_{\vldr,i+1}=\(1+C_i\) \deposit_{\vldr,i}=\(1+\frac12m_i\rho_i\)D_{\vldr,i}$, i.e., validator $\vldr$'s deposit increases by the collective reward factor. Since, $m_i>2/3$, and possibly even $m_i\approx 1$ under default conditions, we obtain the approximation 
\[\deposit_{\vldr,i+1}\approx\deposit_{\vldr,i}+\frac12\rho_iD_{\vldr,i}\]
In other words, a voting validator $\vldr$'s deposit increases by (approximately) half the individual reward factor which in turn is decreasing in the total staked deposit (more validators means each validator earns less). 
\item[Case 2b: $\mathbf 1_{\vldr,i}=0$.] In this case, \[\deposit_{\vldr,i+1}=\frac{1+C_i}{1+\rho_i}\deposit_{\vldr,i}=\frac{1+m_i\rho_i/2}{1+\rho_i}D_{\vldr,i}.\] Since more than $2/3$ of validators are voting successfully (epochs are being finalised), we have that $m_i\rho_i/2\in [\rho_i/3,\rho_i/2]$. Hence, validator $\vldr$'s deposit decreases. The intensity of the decrease depends on the order of magnitude of $\rho_i$ which is typically quite small.
\end{description}
\end{description}
The above analysis is summarised in \Cref{tab:cases}. 
\begin{table}[!htb]
\centering
\setlength{\tabcolsep}{12pt}
\renewcommand{\arraystretch}{1.2}
\begin{tabular}{@{}cclll@{}}
\midrule
\b Epoch Finalization & \b Validator $\vldr$'s vote &\b Validator $\vldr$'s new deposit &\b Parameters \\ 
\midrule
\textbf{ESF$_i>2$} &  $\mathbf 1_{\vldr,i}=1$ & $D_{\vldr,i+1}=D_{\vldr,i}$& $\rho_i=\gamma\cdot D_i^{-p}+\beta\(\text{ESF}_i-2\)$\\
& $\mathbf 1_{\vldr,i}=0$ & $D_{\vldr,i+1}=\frac1{1+\rho_i}D_{\vldr,i}$\\
\midrule
\textbf{ESF$_i=2$} &  $\mathbf 1_{\vldr,i}=1$ & $D_{\vldr,i+1}=\(1+m_i\rho_i/2\)D_{\vldr,i}$ & $\rho_i=\gamma\cdot D_i^{-p}, m_i\in[2/3,1]$\\
& $\mathbf 1_{\vldr,i}=0$ & $D_{\vldr,i+1}=\frac{1+m_i\rho_i/2}{1+\rho_i}D_{\vldr,i}$\\
\bottomrule
\end{tabular}\vspace{0.35cm}
\caption{Updates in validators deposits.}
\label{tab:cases}
\end{table}

\noindent In the Casper contract, the reward factors are updated by calling the function {\tt initialize\_epoch} once per epoch (the contract processes only the first call in each epoch).

\subsubsection*{Slashing Conditions}\label{sub:slashing}
Finally, to account for conflicting votes, Casper introduces the following \emph{commandments} or \emph{slashing conditions}, \cite{Bu1811}: a validator $\vldr$ who publishes two distinct votes
\begin{equation*}
\langle \vldr, \checkpointb_1, \text{\h(\checkpointb\textsubscript{1})}, \text{\h(\checkpointa\textsubscript{1})}, \signature_1 \rangle \hspace{0.3cm} \text{and} \hspace{0.3cm} \langle \vldr, \checkpointb_2,  \text{\h(\checkpointb\textsubscript{2})}, \text{\h(\checkpointa\textsubscript{2})}, \signature_2 \rangle,
\end{equation*}
violates Casper's slashing conditions, if
\begin{enumerate}[leftmargin=*,align=left]
\item[\textbf{I.}] \h(\checkpointb\textsubscript{1}) $=$ \h(\checkpointb\textsubscript{2}): i.e., they are for same target height,
\item[\textbf{II.}] \h(\checkpointa\textsubscript{1}) $<$ \h(\checkpointa\textsubscript{2}) $<$ \h(\checkpointb\textsubscript{2}) $<$ \h(\checkpointb\textsubscript{1}): i.e., one vote is within the \emph{span} of the second vote.
\end{enumerate}
Any validator is able to call the {\tt slash} function with the data for two potentially offending vote messages as its arguments. If the {\tt slash} call is found to be valid, then the offending validator's deposit is partially or entirely taken away (\emph{slashed}) and the sender receives 4\% of the validator's deposit as a `finder's fee'. Invalid calls to {\tt slash} are ignored in the same manner as invalid votes. Calls to {\tt slash} can be valid on any chain, even those chains that do not include one or more of the offending {\tt vote} calls in any of its blocks, because the function arguments and signatures by themselves are sufficient evidence for misbehaviour. In the Casper contract, the degree to which the validator's deposit is shrunk depends to the total amount slashed `recently' across the protocol -- the reasoning is to punish more harshly when there is a higher risk that two conflicting checkpoints with the same height will be finalised.


%
%
%

\section{Analysis}\label{sec:analysis}

In this section we investigate how Casper's incentive mechanism affects the fundamental properties of the overall consensus protocol, i.e., on the underlying block proposal mechanism combined with Casper FFG's checkpointing scheme. Before we begin the analysis, we discuss our threat and execution model in Section~\ref{sec:model}. We proceed by analysing the fundamental properties of \textit{liveness} and \textit{safety} in Section~\ref{sec:liveness}~and~\ref{sec:safety}, respectively. We focus on the following types of liveness and safety faults for checkpointing protocols.\footnote{See also \cite{Gi17,bentov2016snow,kiayias2017ouroboros} and \cite{Ro18} for varying definitions of these concepts.}  \begin{description}[leftmargin=0cm]
\item[Liveness faults:] checkpoints not getting finalised during one or more consecutive epochs. 
\item[Safety faults:] two or more conflicting checkpoints being finalised in the same or different epochs. (After all, this would either lead to a permanent fork, or to at least one node having to overturn a finalised checkpoint through a manual reset.)
\end{description}
Finally, we study \textit{incentive compatibility} (i.e., whether validators are incentivised to follow the protocol) in Section~\ref{sec:incentivecomp}.



\subsection{Execution Model}\label{sec:model}

\subsubsection*{Node Types}

We assume that the protocol consists of two main types of nodes: \emph{block proposers}, who propose blocks on the underlying chain, and \emph{validators} who vote for checkpoint blocks.
We use the term \emph{$\alpha$-strong} for groups of nodes who control \emph{at least} a fraction $\alpha \in (0, 1]$ of the resources necessary for their role in the protocol (e.g., stake for validators, or mining power for PoW block proposers). When we state that a certain group of nodes is $\alpha$-strong during a certain period, e.g., until finalisation resumes, then we assume that their strength at any point in time is at least greater than $\alpha$ even if it varies across the period. For example, if the group of online nodes is $60\%$-strong at the start of a period, but $20\%$ go offline in the middle of the same period, then we say that the online nodes are at most $40\%$-strong throughout the period.
Since the Casper contract does not add new validators to the active validator set when checkpoints are not being finalised, and most of the proofs in this section consider periods without finalisation, changes to the strength of validator groups are in practice limited to network partitions, latency, validators going online/offline, and the effects on their deposits of rewards and penalties.

The theorems in Sections~\ref{sec:liveness}~and~\ref{sec:safety} reason about the liveness and safety of the network as seen from the perspective of a single user, and only hold if the fraction of the consensus nodes who \emph{support} that user's branch is strong enough. In particular, the supporting nodes are those that are honest (i.e., protocol-following), online (i.e., non-faulty, and not being censored or eclipsed), and aware of the same input as the user (i.e., on the same branch during a network partition). We denote by $\alpha$ the strength of the validators who support the chain seen by the user.  Similarly, we denote by $\mu$ the strength of the block proposers who support the chain of the user.


\subsubsection*{Network Synchrony}

In principle, we assume that the network is partially synchronous \cite{dwork1988consensus}, which means that there is a finite upper bound on the time before a message from any node $A$ reaches any other node $B$, but that we do not know the exact value of this bound. This has consequences for our failure model: e.g., we do not require that a vote message always reaches a block proposer before the end of the epoch, which would lead to a validator suffering a liveness penalty. However, when we say that the supporting validators are $\alpha$-strong, we do require that the \emph{proportion} of the nodes whose messages make into the blockchain in time does not go below $\alpha$. In other words, the fraction $1-\alpha$ of non-supporting validators during an epoch includes those whose messages did not make it onto the blockchain before the end of the epoch due to network latency.

In the benchmark parameter setting, the duration of an epoch is $700$ (or $50$ times $14$) seconds. Each validator can individually decide when to cast a vote - this is a trade-off between 1) voting too early and therefore risking that her view of the blockchain is out-of-date  2) voting too late and therefore risking that her vote does not appear on the blockchain in time. For example, a validator who casts her vote after a quarter of the epoch has 175 seconds since the start of the epoch to decide on a block and 525 to get the message on the blockchain. In practice, we assume that these periods are long enough to avoid network latency having a significant impact on the strength of the supporting nodes.

During a network partitions, some nodes will not be able to reach every other node within a bounded time (that is, before the partition is resolved). This case is treated separately in the proofs in this section -- as mentioned before, nodes that cannot communicate with a user are not counted as supporting that user's chain.

\subsubsection*{Interactions Between Block Proposers and Validators}

In \qt{hybrid} Casper FFG, there is some interdependence between the validators and the block proposers in the sense that the protocol messages sent by the former need to be included in blocks proposed by the latter. For example, malicious block proposers could 1) refuse to include vote messages of blacklisted validators in their blocks, 2) refuse to extend blocks that include those messages, or 3) attempt to intimidate other block proposers into enforcing the blacklist through feather forking \cite{miller2013feather}. However, the probability of successfully performing the former two attacks is low if the strength of the malicious block proposers is noticeably below $0.5$, and if the vote messages are sent early within the epoch. For example, for attack 1 it holds that if the malicious nodes are $(1-\mu)$-strong, and if virtually all block proposers have seen a validator's vote after the $k$th block within an epoch, then the probability that not a single block includes the message before the end of the epoch equals $(1-\mu)^{l-k}$. For $1-\mu = \frac{1}{3}$ and $k=20$, this probability is approximately equal to $4.9 \cdot 10^{-15}$. 
For attack 2 to succeed, the malicious nodes need to build a chain of length $l-k$ before the honest nodes do, so that they can override the honest nodes' chain (which may include the blacklisted validator's vote) at the end of the epoch. This probability is similarly low.\footnote{It can be expressed in terms of the probability of reaching some set of absorbing states in a discrete-time Markov chain. Numerical algorithms exist to compute such probabilities, e.g., those implemented in the model checking tool PRISM \cite{kwiatkowska2002prism}, but a discussion of this would be outside the scope of this paper. }
Feather forking is theoretically feasible even when the malicious nodes are less strong, but requires that the honest nodes comply despite invariably hurting the confidence that users have in the platform by doing so, and therefore reducing their own profits. Finally, a malicious coalition of block proposers could try to censor slash messages. Since these messages are not epoch-dependent, the censorship would need to be maintained indefinitely, which is impossible in practice if the malicious nodes are not at least $0.5$-strong.

The role of potentially malicious block proposers in the liveness proof will be discussed more detail in Section~\ref{sec:livenessdiscussion}, and we will make our assumptions regarding the fraction of supporting block proposers explicit for Theorem~\ref{prop:fork}. We note that in the future, Ethereum intends to adopt a fully PoS setting, in which case the supporting block proposer and validator groups always have equal strength.

As a final note, we would like to emphasise that a coalition of validators can also overrule a majority of the block proposers by voting for checkpoints that are different from those prescribed by the longest-chain rule. If the validators are \mbox{$\frac{2}{3}$-strong}, then they have complete freedom to justify and finalise their preferred checkpoints. This would enable a supermajority of activist validators to act against attacks by a majority of block proposers (e.g., long-term censorship of slash messages). Furthermore, any other group of validators has the potential to fork away from the main chain, and to eventually become \mbox{$\frac{2}{3}$-strong} on their own chain through the liveness recovery mechanism. This procedure is called a \textit{minority fork}, but detailed analysis of this mechanism is considered out of scope for this paper.

%
%
%
%
%
%
%

\subsection{Liveness}\label{sec:liveness}


In this section, we focus on checkpoint liveness as defined below.

\begin{definition}[Checkpoint liveness]\label{def:live}
We say that a checkpointing protocol $\Pi$ is \emph{$\alpha$-live} for some $\alpha \in \(0,1\right]$, if any $\alpha$-strong group of validators who support a chain are able to finalise checkpoints on that chain after a finite number of epochs with probability $1$. 
\end{definition}

During any period $E \subset \mathbb{N}$ of consecutive epochs in which more than $1/3$ of the validators (in terms of stake) are not voting correctly, checkpoints cannot be finalised. In the reward mechanism, this translates to an increase in the epochs since finalisation, ESF$_{i}$ which in turn affects both the individual $\rewardfactor_i$ \eqref{eq:rewardfactor} and the collective $\collectivereward_i$ \eqref{eq:collective} reward factors, for $i\in E$. This is captured by \Cref{lem:technical}. Without loss of generality, we assume that a fault involving ($1-\alpha$)-strong validators not following the protocol is initiated at epoch $0$. All relevant notation is summarised in \Cref{tab:base_losses}.

\begin{table}[!htb]
\centering
	\begin{tabular}{@{}lcl@{}}
	\midrule
	\textbf{Notation} && \textbf{Description} \\ \midrule
	$l$ & \; & length (in terms of blocks) of a Casper FFG epoch \\
	$V$ & & current active validator set \\
	$\deposit_{\vldr,i}$  & & deposit of validator $\vldr \in \mathbb{N}$ at the beginning of epoch $i$ \\
	$\deposit_{i}$  & & total deposit at the beginning of epoch $i$ \\
	$m_i\in[0,1]$ & & weighted fraction of correct votes in epoch $i$\\
	$\rewardfactor_i\ge0$ & & individual reward factor in epoch $i$\\
	$\collectivereward_i\ge0$ & & collective reward factor in epoch $i$\\
	$\baseinterest>0$ & & base interest factor $>0$ \\
	$\basepenalty>0$ & & base penalty factor \\
	 $\depsizefactor>0$ & & total deposit dependence factor (typically in $(0,1]$)\\
	 $\alpha\in (0,1]$ & & stake fraction of supporting validators\\
	 $\mu\in (0,1]$ & & stake fraction of supporting block proposers\\\midrule
	\end{tabular}\vskip0.1cm
\caption{List of symbols.}
\label{tab:base_losses}
\end{table}

\subsubsection{Delaying Finalisation: Liveness Recovery from Offline Users}

We begin with a discussion of the scenario where checkpoints are not being finalised due to a sufficiently large number of nodes having gone permanently offline. For ease of notation, we consider the case where a single validator $\vldr$ represents the supporting nodes, and another validator $\nvldr$ the non-supporting nodes.

\begin{lemma}\label{lem:technical}
Let $\deposit_0$ denote the initial stake in epoch $0$ and let $\vldr, \nvldr$ be $\alpha$ and $\(1-\alpha\)$-strong validators, respectively, with $\alpha\in \(0,2/3\)$. Assume that from epoch $0$ onwards, only the $\alpha$-strong validator $\vldr$ continues to vote correctly. Then, for the consecutive epochs $i\ge1$ that are not being finalised
\begin{enumerate}[label=(\roman*)]
\item the total deposits $\deposit_i$ are decreasing in $i$,
\item $\deposit_i$ is given by $\textstyle\deposit_i=\alpha \deposit_0+\(1-\alpha\)\deposit_0\prod_{j=0}^{i-1}\(1+\rewardfactor_j\)^{-1}$,
\item the individual reward factors $\rewardfactor_i$ are unboundedly increasing in $i$.
\end{enumerate}
\end{lemma}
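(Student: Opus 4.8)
The plan is to analyse the deposit recursion \eqref{eq:interest} under the stated assumptions and then bootstrap the three claims from one another. Since only $\vldr$ votes correctly and $\vldr$ is $\alpha$-strong with $\alpha<2/3$, no checkpoint from epoch $0$ onwards is finalised, so for every such non-finalised epoch $i\ge1$ we have $\text{ESF}_i>2$, hence $\collectivereward_i=0$ by \eqref{eq:collective}. Plugging $\collectivereward_i=0$ into \eqref{eq:interest} gives, for $\vldr$ (who votes, $\voteidc_{\vldr,i}=1$), $\deposit_{\vldr,i+1}=\deposit_{\vldr,i}$, and for $\nvldr$ (who does not vote, $\voteidc_{\nvldr,i}=0$), $\deposit_{\nvldr,i+1}=\deposit_{\nvldr,i}/(1+\rewardfactor_i)$. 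So $\vldr$'s deposit is frozen at $\alpha\deposit_0$ while $\nvldr$'s deposit is multiplied by $(1+\rewardfactor_i)^{-1}$ each epoch.

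For claim (ii), I would simply unroll the recursion for $\nvldr$: starting from $\deposit_{\nvldr,0}=(1-\alpha)\deposit_0$, induction gives $\deposit_{\nvldr,i}=(1-\alpha)\deposit_0\prod_{j=0}^{i-1}(1+\rewardfactor_j)^{-1}$, and adding the constant $\deposit_{\vldr,i}=\alpha\deposit_0$ yields the stated formula for $\deposit_i=\deposit_{\vldr,i}+\deposit_{\nvldr,i}$. Claim (i) is then immediate: since each $\rewardfactor_j>0$ (by \eqref{eq:rewardfactor}, as $\deposit_j>0$), the product $\prod_{j=0}^{i-1}(1+\rewardfactor_j)^{-1}$ is strictly decreasing in $i$, and the $\vldr$-term is constant, so $\deposit_i$ is strictly decreasing in $i$. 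I should note that $\deposit_i>0$ for all $i$ (it stays above $\alpha\deposit_0>0$), which keeps us in the first branch of \eqref{eq:rewardfactor} throughout and justifies the above.

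For claim (iii), observe that $\text{ESF}_i$ increases by exactly one each epoch that finalisation fails — concretely, from epoch $0$ onward the last finalised epoch is fixed, so $\text{ESF}_i = i - (\text{height of last finalised epoch})$ grows linearly in $i$. By \eqref{eq:rewardfactor}, $\rewardfactor_i=\baseinterest\deposit_i^{-\depsizefactor}+\basepenalty(\text{ESF}_i-2)$; the first term is nonnegative and the second term $\basepenalty(\text{ESF}_i-2)$ is strictly increasing and unbounded since $\basepenalty>0$ and $\text{ESF}_i\to\infty$. Hence $\rewardfactor_i\to\infty$, and it is increasing in $i$ once we also observe that $\deposit_i$ is decreasing so $\deposit_i^{-\depsizefactor}$ is nondecreasing — both summands are nondecreasing and the second is strictly increasing, giving strict monotonicity.

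The only genuinely delicate point is the interplay in the ordering of the arguments: claim (i) is used (via $\deposit_i$ decreasing, so $\deposit_i^{-\depsizefactor}$ nondecreasing) to upgrade claim (iii) from "unbounded" to "unboundedly increasing", while claim (i) itself rests on $\rewardfactor_j>0$, which only needs the crude bound from \eqref{eq:rewardfactor} rather than anything about monotonicity — so there is no circularity. I would therefore present (ii) first (pure unrolling), then (i) as a corollary, then (iii). The main obstacle, such as it is, is being careful that "the epochs $i\ge1$ that are not being finalised" are exactly the epochs where $\text{ESF}_i>2$ and hence $\collectivereward_i=0$, i.e. confirming that under the hypothesis $\alpha<2/3$ no finalisation can sneak in; this follows because finalising a checkpoint requires a justified checkpoint, which requires $2/3$ of the stake to vote for it, but only the $\alpha<2/3$ fraction votes at all.
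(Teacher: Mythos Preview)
Your proposal is correct and follows essentially the same approach as the paper: both arguments reduce to observing that $\collectivereward_i=0$ during non-finalisation, deriving the recursions $\deposit_{\vldr,i+1}=\deposit_{\vldr,i}$ and $\deposit_{\nvldr,i+1}=(1+\rewardfactor_i)^{-1}\deposit_{\nvldr,i}$, unrolling them, and then using the linear growth of $\text{ESF}_i$ together with the monotonicity of $\deposit_i$ to conclude (iii). The only cosmetic difference is that the paper proves (i) directly from the one-step recursion before stating the closed form (ii), whereas you derive (ii) first and read off (i) from the product formula; neither ordering gains anything substantive.
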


\begin{proof} By definition \eqref{eq:collective}, as long as epochs $i\ge1$ are not being finalised, $C_i=0$ . Hence, by \eqref{eq:interest} \[\deposit_{\vldr,i}=\(1+0\)\frac{1+1\cdot \rewardfactor_{i-1}}{1+\rewardfactor_{i-1}} \deposit_{\vldr,i-1}=\deposit_{\vldr,i-1},\] and similarly $\deposit_{\nvldr,i}=\(1+\rewardfactor_{i-1}\)^{-1}\deposit_{\nvldr,i-1}<\deposit_{\nvldr,i-1}$. The last inequality holds because $\rewardfactor_i>0$ for all $i\ge1$ by definition. Since $\deposit_{i}=\deposit_{\vldr,i}+\deposit_{\nvldr,i}$ for all $i\ge1$, this implies $\deposit_i=\deposit_{\vldr,i}+\(1+\rewardfactor_{i-1}\)^{-1}\deposit_{\nvldr,i-1}<\deposit_{i-1}$, which shows (i).
The expression in (ii) is obtained by repeated application of the previous recursive equalities, $D_{\vldr,i}=D_{\vldr,i-1}$ and $D_{\nvldr,i}=\(1+\rewardfactor_{i-1}\)D_{\nvldr,i-1}$, and the assumption $D_{\vldr,0}=\alpha D_0$. Finally, to prove (iii), observe that ESF$_{i}-2=i$, since the last finalised epoch is \qt{$-2$} and hence\footnote{Since we assumed that the attack started in epoch $0$ to avoid unnecessary notation, $-2$ should be interpreted acccordingly, i.e. it is a convention to denote the last finalised epoch, $2$ epochs prior to the start of the attack.}, by \eqref{eq:rewardfactor} \[\rewardfactor_i=\gamma\deposit_i^{-p}+\beta\(\text{ESF}_i-2\)=\gamma\deposit_i^{-p}+\beta i.\] Since $\deposit_i$ is decreasing by (i) and $p>0$, $\rewardfactor_i$ is increasing to $\infty$ which concludes the proof.
\end{proof}

Intuitively, \Cref{lem:technical} states that as epochs are not getting finalised, Casper's implemented incentives mechanism retains the deposits of correctly and consistently voting validators constant -- i.e., validators who vote in each epoch without violating any slashing conditions -- and decreases the deposits of non-voting validators. Thus, the voting validators' stake as a \emph{share} of the total deposited stake increases. As the ESF increases, this process continues until the voting validators will account for more than $2/3$ of the total stake on the chain that they are voting on. After that, the ESF resets to $2$ and the updates are much less aggressive. However, the overall period that is required for this to happen depends on the initial proportion of their stake and the voting behavior of the adversary (or the accidentally off-line nodes). This is the subject of Section~\ref{sub:worst}.

\subsubsection{Delaying Finalisation: Liveness in the Worst-Case Scenario}\label{sub:worst}
The assumption in \Cref{lem:technical} that the $\(1-\alpha\)$-strong nodes never vote again (after the initiation of the attack), does not correspond to the worst-case scenario for the protocol. If the \qt{non-voting} validators vote irregularly, then this slows down the reduction process of their deposits. In particular, if all $\(1-\alpha\)$-strong malicious validators can coordinate to reach a total vote of \textit{just} below $2/3$ -- the finalisation threshold -- in each epoch, then they thwart finalisation while minimizing the amount of penalised deposits (and equivalently maximize the time that the protocol is unable to finalise checkpoints). This case is treated in \Cref{lem:worst} in which we account for the following types of validators:
\begin{enumerate}[label=(\roman*), noitemsep]
\item Consistently supporting validators. Let $\alpha_i,i\ge0$ denote the deposit of the of $\alpha$-strong validators who vote consistently during all consecutive epochs $i\ge0$ for which finalisation does not happen. The size of the deposits of these validators in the initial period equals $\alpha_0\in\(0,2/3\)$.
\item Irregularly voting validators. To account for the worst-case scenario, we assume that the remaining $1-\alpha_0$ validators (in terms of their initial stake in epoch $0$), can coordinate under a single entity and decide their behavior in order to delay finalisation as much as possible. To model this situation, we assume that in each epoch $i\ge0$, a fraction $\delta_i$ of these validators are voting, where $\delta_i$ is the maximum possible number in $\(0,1\right]$ -- if any -- so that $\alpha_i+\delta_i\(1-\alpha_i\)\le2/3$, i.e., so that finalisation is marginally not possible. This policy maximizes the \qt{life} of the adversarial stake while preventing finalisation of checkpoints.
\end{enumerate}
Our goal is to prove that the sequence of periods for which such a $\delta_i\in\(0,1\right]$ exists is bounded, i.e., that for any initial honest stake $\alpha_0\in \(0,2/3\)$ finalisation of checkpoints will ultimately resume (for $\alpha>2/3$, the claim is trivial). This is achieved in \Cref{lem:worst}. To model the worst-case scenario, we assume that all validators not belonging to the group of $\alpha_0$-strong validators are malicious. In reality, some of them may only accidentally go off-line for some period and resume proper voting after that or vote irregularly during the period of non-finalisation. Also, note that during periods of non-finalisation, new validators cannot enter the protocol, i.e., deposits of new stake are not processed and reflected in the active validator set. This is critical to ensure safety of checkpoints in the PoS environment. 

\begin{theorem}\label{lem:worst}
Let $\deposit_{\vldr,0},\deposit_{\nvldr,0}$ denote the initial stake in epoch $0$ of the $\alpha_0$ and $\(1-\alpha_0\)$-strong validators $\vldr$ and $\nvldr$, respectively, with $\alpha_0\in \(0,2/3\right]$. Assume that from epoch $0$ onwards, only the $\alpha$-strong validator $\vldr$ continues to vote consistently and correctly. Also assume that in each epoch $i>0$, a fraction $\delta_i\in (0,1]$ of the $\(1-\alpha\)$-strong validators is also voting correctly, so that for each epoch $i\ge0$, $\delta_i$ is the solution to the optimization problem 
\begin{alignat*}{3}
\max & \quad \delta_i  \\
\text{subject to:} &\quad \alpha_i+\delta_i\(1-\alpha_i\)\le 2/3 \tag{WC}\\
&\quad \delta_i  \ge 0.
\end{alignat*} 
Then, the number $T\ge1$ of consecutive epochs that are not being finalised is finite.
\end{theorem}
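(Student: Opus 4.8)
The plan is to show that, under the policy in the theorem, the stake fraction of the consistently voting validators is forced up to $1$, so that within finitely many epochs it exceeds $2/3$ and finalisation resumes. Write $a_i := \deposit_{\vldr,i}$ and $b_i := \deposit_{\nvldr,i}$, so $\deposit_i = a_i + b_i$ and the quantity appearing in the constraint (WC) is $\alpha_i = a_i/\deposit_i$. First I would record the one-step recursions. As long as no checkpoint past the last finalised one (epoch ``$-2$'', per the convention used in the proof of \Cref{lem:technical}) is justified --- which holds precisely while the weighted vote stays below the $2/3$ threshold --- we have $\text{ESF}_i = i+2 > 2$ for $i \ge 1$, hence $\collectivereward_i = 0$ by \eqref{eq:collective}. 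Applying \eqref{eq:interest} to $\vldr$, who votes in every epoch so that $\voteidc_{\vldr,i}=1$, gives $a_{i+1}=a_i$; thus $a_i \equiv a_1 > 0$ for all $i \ge 1$. Summing \eqref{eq:interest} over the coalition --- a $\delta_i$-fraction of whose stake votes and retains its deposit, while the remaining $1-\delta_i$ fraction is scaled by $(1+\rewardfactor_i)^{-1}$ --- gives $b_{i+1} = b_i\,\frac{1+\delta_i\rewardfactor_i}{1+\rewardfactor_i}$. (The transition from epoch $0$ to $1$ may carry an extra common factor $1+\collectivereward_0$ when $\text{ESF}_0 = 2$; it cancels in $\alpha_1 = a_1/(a_1+b_1)$ and is immaterial.)

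Next I would exploit the shape of the coalition's response. If $\alpha_i > 2/3$ at some epoch then (WC) is infeasible, finalisation can no longer be suppressed, and $T$ is finite; so suppose for contradiction that $\alpha_i \le 2/3$ for every $i$. Then the (WC) optimum is $\delta_i = \frac{2/3-\alpha_i}{1-\alpha_i}$, and $\delta_i < 1$ since $2/3 < 1$. Because $\delta_i < 1$ and $\rewardfactor_i > 0$, the factor $\frac{1+\delta_i\rewardfactor_i}{1+\rewardfactor_i}$ is strictly below $1$, so $b_i$ --- and hence $\deposit_i = a_1+b_i$ --- is strictly decreasing (the analogue here of \Cref{lem:technical}(i)), which makes $\alpha_i = a_1/(a_1+b_i)$ strictly increasing for $i \ge 1$. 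Since $\alpha \mapsto (2/3-\alpha)/(1-\alpha)$ is strictly decreasing on $[0,2/3)$, it follows that $\delta_i \le \delta_1 =: \bar\delta < 1$ for all $i \ge 1$. Moreover $\deposit_i \ge a_1 > 0$ is bounded below, so $\gamma\,\deposit_i^{-p}$ stays bounded and the escalating penalty term forces $\rewardfactor_i = \gamma\,\deposit_i^{-p} + \beta\, i \to \infty$, just as in \Cref{lem:technical}(iii).

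Finally I would close with a geometric-decay estimate: since $\rewardfactor_i \to \infty$, $\frac{1+\delta_i\rewardfactor_i}{1+\rewardfactor_i} \le \frac{1+\bar\delta\,\rewardfactor_i}{1+\rewardfactor_i} = \bar\delta + \frac{1-\bar\delta}{1+\rewardfactor_i} \to \bar\delta < 1$, so fixing any $q \in (\bar\delta,1)$ there is an $N$ with $b_{i+1} \le q\,b_i$ for all $i \ge N$; hence $b_i \to 0$ and $\alpha_i = a_1/(a_1+b_i) \to 1$, contradicting $\alpha_i \le 2/3$. Therefore $\alpha_i > 2/3$ must occur at some finite epoch, after which the supporting validators alone clear the finalisation threshold and a checkpoint is finalised within a bounded number of further epochs; thus $T < \infty$. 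I expect the crux to be the second paragraph: one has to rule out the seemingly dangerous option in which the coalition keeps its voting fraction $\delta_i$ near $1$ and so preserves its stake indefinitely. This is defeated by the self-reinforcing coupling --- a shrinking $b_i$ raises $\alpha_i$, which pushes the best response $\delta_i$ down --- together with the unbounded growth of $\rewardfactor_i$ driven by the $\beta\,\text{ESF}_i$ penalty; jointly these pin the per-epoch contraction factor of $b_i$ strictly below $1$.
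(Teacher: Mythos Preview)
Your argument is correct and takes a genuinely different route from the paper's. The paper substitutes the optimal $\delta_i=\frac{2/3-\alpha_i}{1-\alpha_i}$ into the recursion for $\deposit_{\nvldr,i+1}$, then crudely bounds $\deposit_{\nvldr,i}\le \deposit_{\nvldr,0}$ to obtain a lower bound on $\alpha_{i+1}$ in terms of $\alpha_i$, $\alpha_0$, and $(1+\rewardfactor_i)^{-1}$; it passes to the limit (using that $\alpha_i$ is monotone and $(1+\rewardfactor_i)^{-1}\to 0$) and argues that the resulting fixed-point inequality forces $\lim\alpha_i>2/3$. You instead work directly with the adversarial stake $b_i$: by observing that $\delta_i$ is nonincreasing (since $\alpha_i$ is nondecreasing and $\alpha\mapsto(2/3-\alpha)/(1-\alpha)$ is decreasing), you cap the contraction factor by $\frac{1+\bar\delta\,\rewardfactor_i}{1+\rewardfactor_i}=\bar\delta+\frac{1-\bar\delta}{1+\rewardfactor_i}\to\bar\delta<1$, yielding eventual geometric decay $b_i\to 0$ and hence $\alpha_i\to 1$. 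Your approach is more elementary and gives a sharper conclusion ($\alpha_i\to1$ rather than merely $\lim\alpha_i>2/3$), and it sidesteps the limit-inequality analysis entirely; the paper's route, on the other hand, keeps the dependence on $\alpha_0$ visible in the limiting bound, which could in principle be mined for quantitative information. Your handling of the epoch-$0$ boundary (the possible $1+\collectivereward_0$ factor cancelling in $\alpha_1$) is also more careful than the paper's treatment.
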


\begin{proof}
Since, any feasible solution of (WC) (where WC stands for the worst-case policy) must satisfy the constraints of (WC), we have that the optimal solution $\delta_i$, whenever it exists, will be given by 
\begin{equation}\label{eq:opt}
\delta_i=\frac{2/3-\alpha_i}{1-\alpha_i}.
\end{equation}
Hence, (WC) is feasible if $\alpha_i\le 2/3$. Accordingly, to prove the claim, we need to show that there exists a $T\ge1$ such that $\alpha_T>2/3$. According to \Cref{tab:cases}, the deposit-shares at the start of epoch $t+1$ of the initially $\alpha_0$-strong honest validators and the $\(1-\alpha_0\)$-strong validators who follow the (WC) plan will be
\begin{align*}
\deposit_{\vldr,i+1}&=\deposit_{\vldr,i}, \quad \forall i\le T\\
\deposit_{\nvldr,i+1}&=\delta_i\deposit_{\nvldr,i}+\(1-\delta_i\)\(1+\rewardfactor_i\)^{-1}\deposit_{\nvldr,i}\overset{\eqref{eq:opt}}=\frac{\deposit_{\nvldr,i}}{1-\alpha_i}\left[\frac23-\alpha_i+\frac13\(1+\rewardfactor_i\)^{-1}\right], \quad \forall i\le T.
\end{align*}
Hence, 
\begin{align*}
\alpha_{i+1}&=\frac{\deposit_{\vldr,i+1}}{\deposit_{\vldr,i+1}+\deposit_{\nvldr,i+1}}=\frac{\deposit_{\vldr,0}}{\deposit_{\vldr,0}+\frac{\deposit_{\nvldr,i}}{1-\alpha_i}\left[\frac23-\alpha_i+\frac13\(1+\rewardfactor_i\)^{-1}\right]}\\&\overset{(*)}\ge
\frac{\deposit_{\vldr,0}}{\deposit_{\vldr,0}+\frac{\deposit_{\nvldr,0}}{1-\alpha_i}\left[\frac23-\alpha_i+\frac13\(1+\rewardfactor_i\)^{-1}\right]}=\frac{1}{1+\frac{1-\alpha_0}{1-\alpha_i}\left[\frac23-\alpha_i+\frac13\(1+\rewardfactor_i\)^{-1}\right]},\\[-0.2cm]
\end{align*}
where for the inequality $(*)$, we used that $\deposit_{\nvldr,i}<\deposit_{\nvldr,0}$ by \Cref{lem:technical}-(i). Now, since the limit $\alpha$ of $\alpha_i$ as $i\to \infty$ exists -- the sequence $\alpha_i$ is increasing by \Cref{lem:technical}-(i) and (iii) and bounded by $1$ -- and since $\(1+\rewardfactor_i\)^{-1}$ vanishes as $i\to\infty$, we can take limits in the last inequality to obtain 
\begin{align}\label{eq:inequality}
\alpha\ge\frac{1-\alpha}{1-\alpha+\(1-\alpha_0\)\(\frac23-\alpha\)}.
\end{align}
Hence, the possible limits of the sequence $\alpha_i$ is the locus of all points $\alpha\in [0,1]$ that satisfy inequality \eqref{eq:inequality} for any given value of $\(1-\alpha_0\)\in[1/3,1)$ (adversarial stake at the outset of the attack). To obtain a contradiction, assume that $\alpha\le 2/3$. Then, $\(1-\alpha_0\)\(2/3-\alpha\)\le0$ for any initial honest stake $\alpha_0\in(0,2/3]$ and \eqref{eq:inequality} becomes
\[\alpha\ge\frac{1-\alpha}{1-\alpha+0}=1
\] 
which is a contradiction, since we have assumed that $\alpha\le 2/3$. Hence, in the limit it holds that $\alpha>2/3$, which implies that there exists $T\in \mathbb N$, i.e., finite, such that the stake $\alpha_t$ of voting validators at each epoch $t\ge T$ exceeds $2/3$ and hence finalisation of checkpoints resumes. This concludes the proof.
\end{proof}

\subsubsection{Checkpoint Liveness}

Even in the case described in \Cref{lem:worst}, the adjustment of the deposits of non-voting validators will continue until the (consistently) voting validators will acount for more than $2/3$ of the total stake on the chain that they are voting on. From this point on, they will resume finalisation of checkpoints and hence, the ESF will reset to $2$, hence making changes in deposits much slower. This is the main argument behind Casper's liveness which is formalised in the \Cref{thm:liveness} and illustrated in \Cref{fig:liveness}.  
\begin{theorem}\label{thm:liveness}
The Casper contract is $\alpha$-live for any $\alpha\in\(0,1\right]$.
\end{theorem}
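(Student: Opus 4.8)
The plan is to unwind \Cref{def:live}: for every $\alpha\in\(0,1\right]$ I must exhibit a finite (possibly random) number of epochs after which some checkpoint on the branch seen by the user is finalised with probability $1$, whatever the remaining $1-\alpha$ of the stake does. Recall from the execution model that calling a validator group $\alpha$-strong already presupposes that its votes are recorded on that branch in time; so, after normalising time so that this group begins voting consistently and correctly at epoch $0$, the case $\alpha\ge 2/3$ is immediate. On their branch the supporting validators cast valid votes with the last justified checkpoint as source and the current checkpoint as target, which justifies the current checkpoint (Section~\ref{sub:finalisation}); repeating this in the next epoch with the just-justified checkpoint as source finalises it. Hence finalisation occurs within two epochs.

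For $\alpha<2/3$ I would invoke the deposit dynamics already developed. While checkpoints fail to finalise, \Cref{lem:technical} shows that the supporting validators' deposits stay fixed while every non-supporting deposit strictly shrinks (indeed $\rewardfactor_i>0$ and $\rewardfactor_i\to\infty$), so the supporting stake share is strictly increasing. The behaviour of the remaining validators that postpones recovery the most is to vote just below the $2/3$ threshold in every epoch, i.e., the worst-case policy (WC) of \Cref{lem:worst}: voting less only accelerates the shrinkage of their deposits (and hence the growth of the supporting share), while voting enough to cross $2/3$ triggers finalisation outright. Thus (WC) dominates every adversarial (and latency-induced) behaviour, and \Cref{lem:worst} supplies a finite $T$ with supporting share exceeding $2/3$. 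From epoch $T$ on we are back in the supermajority case on the same branch, so finalisation follows within two further epochs, for a total of $T+2<\infty$.

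The count above is deterministic once every epoch's supporting votes are on the branch, which the $\alpha$-strong hypothesis guarantees; the only residual randomness is in the underlying PoW block production (whether an epoch's blocks — in particular a checkpoint block — are mined on the branch, and how PoW ties are resolved), and with a strictly positive-strength set of supporting block proposers each epoch completes almost surely, so over the finitely many epochs above finalisation occurs with probability $1$. The main obstacle I expect is making the dominance step fully rigorous: one must verify that no mixture of partial voting, censorship of votes by block proposers, network latency, and validators going on/offline can drive the supporting share up more slowly than (WC), and that its monotonicity survives all of these. When the supporting block proposers are weak ($\mu$ small) this forces the supporting validators onto a fork of their own, so the statement is really about liveness of whichever branch the user follows; a careful treatment of this minority-fork construction is, as the authors note, beyond the present scope, and I would simply cite it.
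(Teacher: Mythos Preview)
Your argument is correct but takes a somewhat different route from the paper. The paper's proof works under the simplifying assumption that the $(1-\alpha)$-strong non-supporting validators stop voting entirely after epoch~$0$ (with a footnote deferring the general case to the worst-case analysis of \Cref{lem:worst}), and then derives an explicit finite bound $k_\alpha$ directly from \Cref{lem:technical}(ii): finalisation resumes once $\prod_{i=0}^{k-1}(1+\rewardfactor_i)\ge 2(1-\alpha)/\alpha$, and the crude estimate $\rewardfactor_i\ge\beta i$ together with $\prod(1+\rewardfactor_i)\ge 1+\sum\rewardfactor_i$ shows such a $k$ exists. You instead argue that the (WC) policy of \Cref{lem:worst} dominates every adversarial voting pattern, and then simply invoke the finite $T$ supplied there. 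Your route is arguably the more honest one, since the paper's own footnote concedes that its direct computation is ``intuitive'' and that the real work is in \Cref{lem:worst}; on the other hand, the paper's version is more self-contained and yields a concrete bound without the dominance step you flag as delicate. That dominance step is in fact not hard to make rigorous epoch-by-epoch: with $\deposit_{\vldr,i+1}=\deposit_{\vldr,i}$ fixed and $\deposit_{\nvldr,i+1}=[\delta_i+(1-\delta_i)(1+\rewardfactor_i)^{-1}]\deposit_{\nvldr,i}$ increasing in $\delta_i$, the adversary minimises $\alpha_{i+1}$ by choosing the largest feasible $\delta_i$, which is exactly (WC). Your additional remarks on PoW randomness and the minority-fork caveat go beyond what the paper's proof addresses and are appropriate given \Cref{def:live}.
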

\begin{proof}
If the supporting validators control more than $\frac{2}{3}$ of the stake, then finalisation and hence, liveness are immediate. To treat the remaining case, assume that voting validators control $\alpha<\frac{2}{3}$ of the stake at epoch $0$. Without loss of generality\footnote{Possibly after renumbering epochs so that the (WC) policy mentioned in the worst case above, does not apply. In any case, the intention is to provide an intuitive proof about Casper's checkpoint liveness devoid of the technicalities of the worst-case scenario that was treated in \Cref{sub:worst}. All arguments carry over to that case as well but with different bounds on the time before  finalisation is resumed.}, also assume that after epoch $0$ the remaining $\(1-\alpha\)$-strong validators stop voting. In this case, finalisation will resume after epoch $k\ge1$, if $\deposit_{\vldr,k}\ge\frac{2}{3}\deposit_k$ or equivalently if $\deposit_k\le\frac{3}{2}\alpha\deposit_0$, since, by the proof of \Cref{lem:technical}, $\deposit_{\vldr,k}=\deposit_{\vldr,0}$. Hence, by \Cref{lem:technical}(ii)
\[\textstyle\frac{3}{2}\alpha\deposit_0\ge\alpha \deposit_0+\(1-\alpha\)\deposit_0\prod_{i=0}^{k-1}\(1+\rewardfactor_i\)^{-1},\]
which is equivalent to $\prod_{i=0}^{k-1}\(1+\rewardfactor_i\)\ge 2\(1-\alpha\)/\alpha$. This implies that finalisation will resume after epoch $k_\alpha$, where $k_\alpha$ is the solution to the following minimisation problem 
\[k_\alpha:=\textstyle\min {\left\{k\in \mathbb N : \prod_{i=0}^{k-1}\(1+\rewardfactor_i\)\ge 2\(1-\alpha\)/\alpha\right\}}.\]
Hence, it remains to show that the above problem has a finite solution $k_\alpha\in\mathbb N$ for every $\alpha\in\(0,\frac{2}{3}\)$. Since \mbox{$\rewardfactor_i=\gamma \deposit_i^{-p}+\beta i\ge \beta i$} by the proof of \Cref{lem:technical} (iii), the standard inequality $\prod_{i=0}^{k-1}\(1+\rewardfactor_i\)\ge 1+\sum_{i=0}^{k-1}\rewardfactor_i$, implies that it suffices to find a $k$ such that $\textstyle\sum_{i=0}^{k-1}\beta i=\beta\frac{k\(k-1\)}{2}\ge 2\(1-\alpha\)/\alpha$. Since $k^2$ is unbounded and $\beta>0$ is constant, such a $k$ exists for every $\alpha\in\(0,2/3\)$.
\end{proof}

\begin{figure*}[!htb]
\centering
\includegraphics{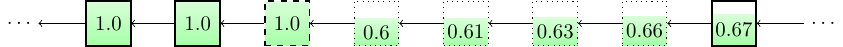}
\caption{Casper's liveness property, cf. \Cref{thm:liveness}: if some validators stop voting, the fraction of stake controlled by properly voting validators is adjusted over time to account for more than $2/3$ of the total stake. Finalisation resumes after some period of checkpoints that could not be finalised. In Figures~\ref{fig:liveness}~and~\ref{fig:safety}, \emph{solid} frames represent finalised epochs, \emph{dashed} frames justified epochs, and \emph{dotted} frames epochs that are neither justified nor finalised. The numbers inside the frames represent the stake fractions of the voting validators (in an exaggerated parameter setting).}
\label{fig:liveness}
\end{figure*}

Given the benchmark parametrisation of the contract, cf. \Cref{sub:scheme,sub:choice}, the number of epochs needed to resume finalisation can be computed numerically: let $\phi(\alpha)$ be the minimum number of epochs required before finalisation is resumed if an $\alpha$-strong group of of validators permanently stops voting. The evolution of $\phi(\alpha)$ is illustrated in \Cref{fig:finalization} for different values of $\alpha$.\footnote{\Cref{lem:technical} and \Cref{thm:liveness} assume that faulty validators stop voting \emph{completely}. More elaborate voting/non-voting strategies which can delay finalisation beyond the rate given by $\phi(\alpha)$ are discussed in Section~\ref{lem:worst}.}. We emphasise the following values which we will use in the study of Casper's safety properties: for $\alpha = 0.67$, $0.51$, and $0.49$, the number of epochs needed for $\alpha$-strong validators to resume finalisation is  3733, 2698, and 2546 respectively. 
\begin{figure}[!htb]
\centering
\includegraphics[trim={0 0.2cm 0 0}]{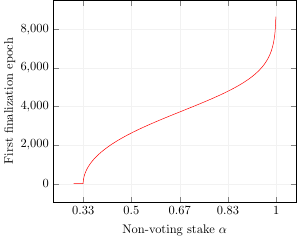}
\caption{Epoch of the first finalisation on a chain with non-voting validators controlling $1-\vdsize$ of the (initial) total stake.}
\label{fig:finalization}
\end{figure}

\subsubsection{Discussion of the Liveness Proof}\label{sec:livenessdiscussion}

We conclude the section on liveness with a discussion about the implications of some of the assumptions of Theorem~\ref{thm:liveness}. First, the theorem states that liveness is recovered within a finite number of epochs, but the \textit{time duration} of those epochs depends on the underlying block proposal mechanism. For example, if an $(1-\mu)$-strong group of miners in Ethereum's current PoW chain also go offline, then the average time between blocks goes up to $14/\mu$ seconds. If \textit{all} miners have gone offline (i.e., $\mu = 0$), then the number of epochs before liveness recovery is still finite, but the time duration of each epoch is infinite. However, the PoW chain's permissionless nature allows anyone (including the validators) to join the protocol, so this situation would be unlikely to persist in practice.

As discussed in Section~\ref{sec:model}, the block proposers also have other methods to influence liveness, e.g., by censoring vote messages. Note that our definition of $\alpha$-strong support precludes more than an $(1-\alpha)$-strong group of validators being censored by the block proposers. Also, a coalition of validators can ignore the standard fork-choice rule to prioritise a chain that includes their votes. As such, the proof of checkpoint safety holds for all supporting block proposer strengths $\mu \in (0,1]$, assuming that the underlying block proposal mechanism is still live (which is the case for Ethereum's PoW chain).

Finally, although we have proven the liveness of Casper's checkpointing mechanism, we have so far not discussed its impact on transaction liveness, i.e., the ability of honest nodes to append transactions to the ledger. If a coalition of malicious validators is at least $\frac{2}{3}$-strong, then it is free to perform censorship by refusing to vote for checkpoints that include a blacklisted transaction. Furthermore, as we will see in the next section, if the malicious-nodes are $1-\alpha$-strong with $\alpha \in (\frac{1}{3}, \frac{1}{2})$, then they can vote for checkpoints on a censored chain, and be able to finalise checkpoints before the $\alpha$-strong nodes that support the main chain (i.e., the chain that observes the protocol's fork-choice rule) do so. Hence, although the checkpointing protocol is by itself live for any $\alpha > 0$, it will only produce meaningful checkpoints for users if $\alpha > \frac{1}{2}$.\footnote{If $\alpha \leq \frac{1}{2}$, the system may still be recovered using a minority fork as mentioned in Section~\ref{sec:model}.}

\subsection{Safety}\label{sec:safety}

Having proved the liveness of our checkpointing protocol, we now move on to safety, which we define as follows:

\begin{definition}[Checkpoint safety]\label{def:safety}
We say that a checkpointing protocol $\Pi$ is \emph{safe} if the following holds: any protocol-following node who considers a checkpoint $i$ as final at some time point $\tau \geq 0$, will also consider $i$ as final at any time point $t > \tau$.
\end{definition}

To explore the trade-off between liveness and safety, we distinguish two scenarios: in the first, we assume a unified network in which clients have a view of all active chains, and in the second, a partitioned network, in which each client has a view of only a single chain.\par 
In the first scenario, we seek to prevent the \emph{nothing-at-stake} problem, which occurs when there is an incentive to finalise conflicting checkpoints on different chains during a fork. Casper's mechanism ensures that in the short term, different checkpoints cannot be finalised unless at least $1/3$ of validators violate one slashing condition. Intuitively, this relies on the fact that a checkpoint can be finalised if and only if a direct child of this checkpoint has been justified, see \Cref{sub:finalisation}. Hence, for two conflicting checkpoints to have been finalised, there must exist two pairs of consecutive justified checkpoints on two different (conflicting) chains. Taking into account that votes between different chains are identified by the protocol as invalid and are ignored, two cases may have occured:
\begin{itemize}[leftmargin=*]
\item two of the conflicting checkpoints have the same height: this directly violates slashing condition I.
\item all conflicting checkpoints have different heights: this implies that one pair of consecutive justified checkpoints must be included within the span of two justified checkpoints on the conflicting chain, which violates slashing condition II.
\end{itemize}
This is the statement of \Cref{prop:slash}, whose proof formalises the above intuition and is similar in nature to \cite[Theorem 1]{Bu18}.

\begin{theorem}\label{prop:slash}
Assuming that the network is not partitioned, two conficting checkpoints cannot be finalised unless at least $\frac{1}{3}$ of the validators violate a slashing condition. 
\end{theorem}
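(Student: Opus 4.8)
The plan is to formalize the two-case intuition already sketched in the paragraph preceding the statement. Suppose, for contradiction, that two conflicting finalised checkpoints $\checkpointa$ and $\checkpointb$ exist (neither an ancestor of the other), while fewer than $1/3$ of the validators violate any slashing condition. By the definition of finalisation in \Cref{sub:finalisation}, $\checkpointa$ being finalised on its chain means $\checkpointa$ is justified and its direct checkpoint-child $\checkpointa'$ (with $\checkpointa = P^{\epochlength}(\checkpointa')$) is also justified; likewise $\checkpointb$ is justified with a justified direct child $\checkpointb'$. So I have two pairs of consecutive justified checkpoints, $(\checkpointa,\checkpointa')$ and $(\checkpointb,\checkpointb')$, lying on two conflicting chains. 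I would then invoke the recursive structure of justification: unwinding the definition back to the genesis block $g$ (which is justified by convention), each justified checkpoint carries a $2/3$-supermajority link from a justified ancestor, so along each chain there is a strictly increasing sequence of checkpoint heights $h(g)=0 = h_0 < h_1 < \cdots$ at which justification links land.

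The core combinatorial step is the following dichotomy on heights. Consider the two pairs $(h(\checkpointa), h(\checkpointa'))$ and $(h(\checkpointb), h(\checkpointb'))$ where $h(\checkpointa') = h(\checkpointa) + 1$ in checkpoint-tree height (a direct checkpoint-child), similarly for $\checkpointb$. Either some checkpoint on the $\checkpointa$-chain shares a target height with some checkpoint on the $\checkpointb$-chain among those involved in justification links — in which case, since the chains conflict, those two justified checkpoints are distinct blocks at the same height, and the $2/3$-supermajorities of their (target) votes must overlap in at least $1/3$ of the stake; every validator in that overlap cast two votes with equal target height $h(\checkpointb_1)=h(\checkpointb_2)$, i.e. violated slashing condition I. Or all the relevant heights are distinct, in which case I pick the link $(\checkpointa, \checkpointa')$ on one chain and walk up the justification chain on the conflicting side to find a justified link $(\vldrb_1, \vldrb_2)$ whose source height is $< h(\checkpointa)$ and whose target height is $> h(\checkpointa')$ — this is possible precisely because the conflicting chains branch below both links and the height sequences are cofinal, so some link must ``straddle'' the $[h(\checkpointa), h(\checkpointa')]$ gap. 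Then $h(\text{source of }\vldrb) < h(\checkpointa) < h(\checkpointa') < h(\text{target of }\vldrb)$ with $h(\checkpointa') = h(\checkpointa)+1$ forcing the strict nesting $h(\vldrb\text{-src}) < h(\checkpointa) < h(\checkpointa') < h(\vldrb\text{-tgt})$, matching slashing condition II; the supermajority supporting the $(\checkpointa,\checkpointa')$ link and the supermajority supporting the $(\vldrb_1,\vldrb_2)$ link again intersect in $\ge 1/3$ of the stake, and each such validator has a vote nested inside the span of another, violating condition II. Either way, $\ge 1/3$ of validators are slashable, contradicting our assumption.

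I would structure the write-up as: (1) set up the contradiction hypothesis and extract the two justified consecutive pairs; (2) a short lemma-style observation that two distinct justified checkpoints of equal height force a $1/3$-overlap violating condition I (the ``$2/3 + 2/3$ implies $\ge 1/3$ in the intersection'' counting argument, which I would state once and reuse); (3) the height-straddling argument producing a condition-II violation in the remaining case, using that consecutive checkpoints differ by exactly one in checkpoint-tree height and that justification chains reach back to $g$; (4) conclude. I expect the main obstacle to be step (3): making rigorous the claim that when all heights are distinct, some justification link on one conflicting chain must have its source strictly below and its target strictly above a chosen link on the other chain. This requires carefully tracking the interleaving of the two increasing height sequences and using the fact that both originate at the common justified ancestor $g$ (so the sequences start equal and then diverge), plus a minimality/pigeonhole choice of which link to straddle. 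The argument is the standard accountable-safety proof for Casper FFG, so I would lean on the structure of \cite[Theorem 1]{Bu18} as indicated, but spell out the slashing-condition bookkeeping since the reward scheme has changed and only the slashing conditions (not the rewards) are relevant here.
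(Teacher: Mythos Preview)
Your proposal is correct and follows essentially the same two-case approach as the paper: either two conflicting justified checkpoints share a target height (forcing a condition~I violation), or all heights are distinct and a justification link on one chain must straddle the consecutive finalised pair on the other (forcing a condition~II violation). Your write-up is, if anything, slightly more explicit than the paper's in spelling out the $2/3 + 2/3 - 1 \ge 1/3$ supermajority-overlap counting that actually delivers the ``at least $1/3$'' bound, which the paper's proof leaves implicit.
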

\begin{proof} Let $B_1$ and $B_1'$ denote two conflicting finalised checkpoints with direct children $B_2$ and $B'_2$, respectively. Also, let $M$ denote the maximal -- in terms of block height -- common element in the chains of the two conflicting justified checkpoints, i.e.,
\[M:=\arg\max{\left\{h\(B\): B \in C\(B_2\)\cap C\(B'_2\)\right\}}\]
$M$ exists since at least $g\in C\(B_2\)\cap C\(B'_2\)$ by assumption. Then, there exist $m,n\ge2$ such that 
\begin{align*}
C\(B_2\)&=\(B_2,B_1,P\(B_1\),\dots,P^n\(B_1\), C\(M\)\),\\
C\(B'_2\)&=\(B'_2,B'_1,P\(B'_1\),\dots, P^m\(B'_1\), C\(M\)\)
\end{align*} 
with $P^m\(B'_1\)\neq P^n\(B_1\)$ by definition of $M$. If there exist blocks 
\[B\in \(B_2,B_1,P\(B_1\),\dots,P^n\(B_1\)\) \quad \text{and} \quad B'\in \(B'_2,B'_1,P\(B'_1\),\dots,P^m\(B'_1\)\)\] 
with $h\(B\)\neq h\(B'\)$, then this creates a violation of slashing condition I. In other words, given that all blocks in the chains up to $B_2$ and $B'_2$ are different after $C\(M\)$, they must be at different heights for slashing condition I to not be violated. To proceed, let's assume that condition I is not violated, i.e., that $h\(B\)\neq h\(B'\)$ for all $B\in \(B_2,B_1,P\(B_1\),\dots,P^n\(B_1\)\)$ and $B'\in \(B'_2,B'_1,P\(B'_1\),\dots,P^m\(B'_1\)\)$. However, since $B_2$ and $B'_2$ are both direct children of $B_1$ and $B'_1$ respectively, this means that $h\(B_2\)=l+h\(B_1\)$ and $h\(B'_2\)=l+h\(B'_1\)$ where $l$ denotes the epoch length. In turn, this implies -- after possibly renaming $B'_1$ to $B_1$ and $B'_2$ to $B_2$ and hence without loss of generality -- that 
\[h\(B'_1\)<h\(B'_2\)<h\(B_1\)<h\(B_2\).\]
Hence, since $B'_2$ and $B'_1$ have consecutive heights (in terms of checkpoints), there must be a $k \in \{1,2,\dots, n\}$ with 
\[h\(P^k\(B_1\)\)<h\(B'_1\)<h\(B'_2\)<h\(P^{k-1}\(B_1\)\)\]
which creates a violation of slashing condition II. 
\end{proof}
\begin{remark}[Theorem~\ref{prop:slash} during a network partition]
We have required for the proof of Theorem~\ref{prop:slash} that the network is not undergoing a partition. This assumption can be loosened somewhat, but the duration of the partition does have an impact on the minimum amount of stake that needs to be slashed to achieve a safety fault. The reason is that the deposit sizes on the different branches of the fork evolve as the partition continues. For example, if the honest nodes are $80\%$-strong but split evenly across two branches during a fork, and the malicious nodes are 
$20\%$-strong but able to vote on both branches, then the malicious voters will not be able to cause a safety fault immediately. Also, since the honest nodes cannot see that the malicious nodes are voting on both branches, no slash messages will be broadcast for the duration of the partition. However, in the long term, the liveness recovery mechanism will ensure that the voting validators on both branches achieve a two-thirds supermajority. Although the validators who voted on both branches can be slashed in retrospect, the total amount that gets slashed is only $20\%$ of the deposit size at the start of the partition. 

For a complete analysis, the minimum amount of stake required to cause a safety fault would need to be expressed as a function of the number of epochs since the network was partitioned, and the distribution of the stake across the different branches of the blockchain. We do note that the penalty mechanism, which increases in intensity as the ESF grows larger, does not have a major effect during short-term partitions. In the remainder of this section, we will discuss the effect of long-term network partitions without considering the additional effect of votes that violate slashing conditions, and leave analysis of the precise interplay between slashing and network partitions as future work.
\end{remark}
We now turn to the scenario of a long-term network partition. In this case, Casper's checkpointing mechanism provides an additional layer of security which improves over classic PoW protocols against \emph{long-range attacks}. This is the statement of \Cref{prop:fork} which we prove for Casper's current parametrisation, cf. \Cref{sub:scheme,sub:choice}.  To address this case in full generality, we need to take into account the way that the network is partitioned. For this, we distinguish between the validators and the block proposers who support the user's chain, which we assume to be $\alpha$-strong and $\mu$-strong, respectively, with $\alpha,\mu \in [0,1]$. For example, if $\alpha=0.6$ and $\mu=0.2$ then $60\%$ of the deposit-weighted validators are on the user's side of the partition, but only $20\%$ of the mining power. 
%
%
Given this notation, we will refer to a chain with $\alpha$ fraction of validators and $\mu$ fraction of block proposers as an $\(\alpha,\mu\)$-chain. In what follows, we define the \emph{canonical chain} as the chain on which checkpoints are finalized first in terms of time (so not in terms of blocks or epochs). As mentioned before, we assume that this is a genuine network partition, i.e., that no validators vote on both branches of the chain. We then have the following.
\begin{theorem}\label{prop:fork}
Consider a $\(\alpha,\mu\)$-chain with $\alpha>0.5$ and $\mu\in[0,1]$. Then, 
\begin{description}[leftmargin=0cm,noitemsep]
\item[Case I.] If $2/3<\alpha\le1$, the $\(\alpha,\mu\)$-chain will finalise checkpoints after at most $3$ epochs. For a conflicting checkpoint to be finalised, the partition should last at least $\phi(2/3) = 3733$ epochs. In this case, the probability that a checkpoint on the $\(\alpha,\mu\)$-chain will be finalised first, is at least $P\(B<\mu\)$ with $B\sim \text{Beta}\(3,3733\)$, which for $\mu\ge0.004$ is larger than $0.9999$.  
\item[Case II.] If $0.51<\alpha\le 2/3$ and $0.5<\mu\le 1$, the $\(\alpha,\mu\)$-chain will finalise checkpoints first with overwhelming probability, after at most $\phi(0.51) = 2546$ epochs. For a conflicting checkpoint to be finalised, the partition should last at least $\phi(0.49) = 2698$ epochs.
\end{description}
\end{theorem}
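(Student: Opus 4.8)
The plan is to treat \Cref{prop:fork} as a \emph{race to finalisation} between the user's $(\alpha,\mu)$-chain and a conflicting $(1-\alpha,1-\mu)$-chain: we are in a genuine partition, so no validator votes on both branches; the canonical chain is the one on which a checkpoint is finalised first in time, and by the fork-choice rule finalised checkpoints are never reverted. Three ingredients are needed: (a) an upper bound on the number of Casper epochs before the $(\alpha,\mu)$-chain finalises a checkpoint; (b) a lower bound on the number before the conflicting chain can; and (c) a probabilistic comparison of how fast the two branches grow, given that the underlying PoW chain keeps producing blocks split $\mu:(1-\mu)$ between them. For (a) and (b) I would invoke the liveness analysis: on a branch whose supporting validators exceed $2/3$ of the stake, two consecutive full epochs of supermajority voting justify two consecutive checkpoints and hence finalise the first, so $3$ epochs suffice (the extra epoch absorbs the one in progress when the partition begins, and the case where the source checkpoint is not yet justified); on a branch whose supporting validators are only $\alpha'<2/3$-strong, the number of epochs until finalisation resumes is exactly $\phi(\cdot)$ from \Cref{thm:liveness} (via \Cref{lem:technical}, and \Cref{lem:worst} under adversarial voting), and $\phi$ is increasing in the size of the group that stopped voting. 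For (c) I would model global block production as a sequence of i.i.d.\ trials, each block going to the user's branch with probability $\mu$, note that each chain needs $50$ blocks per epoch, and use the Beta--Binomial identity $P(\mathrm{Bin}(n,\mu)\ge r)=I_\mu(r,n-r+1)=P(\mathrm{Beta}(r,n-r+1)\le\mu)$.

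For Case I ($2/3<\alpha\le1$), the $(\alpha,\mu)$-chain has a voting supermajority, so by (a) it finalises within $3$ epochs, while the conflicting chain holds only $1-\alpha<1/3$ of the stake, so by (b) and the monotonicity of $\phi$ it cannot finalise a conflicting checkpoint before $\phi(\alpha)\ge\phi(2/3)=3733$ epochs — hence the partition must last at least that long for a safety fault to be possible. For the win probability I would, pessimistically, let the user's chain need exactly $3$ epochs and the conflicting chain exactly $3733$; then the user's chain finalises first iff at least $150$ of the first $3\cdot 50+3733\cdot 50-1$ blocks are ``user'' blocks, which by the identity above equals $P(\mathrm{Beta}(150,186650)\le\mu)$. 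Since $\mathrm{Beta}(150,186650)$ and $\mathrm{Beta}(3,3733)$ share the mean $3/3736$ while the former is the more concentrated, and since $\mu\ge0.004$ lies well above that mean, this is at least $P(\mathrm{Beta}(3,3733)<\mu)$; evaluating the incomplete beta function (or using $\mathrm{Beta}(3,3733)\approx\mathrm{Gamma}(3)/3733$) gives a value above $0.9999$ whenever $\mu\ge0.004$.

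For Case II ($0.51<\alpha\le2/3$, $0.5<\mu\le1$) both branches need liveness recovery. The $(\alpha,\mu)$-chain, on which a $(1-\alpha)$-strong group stopped voting with $1-\alpha<0.49$, finalises after at most $\phi(0.49)=2546$ epochs, whereas the conflicting chain, on which an $\alpha$-strong group stopped with $\alpha>0.51$, needs at least $\phi(0.51)=2698$ epochs (both by \Cref{thm:liveness} and the monotonicity of $\phi$), so the partition must again last at least $2698$ epochs before a conflicting checkpoint can appear. Running the same race with parameters $(2546\cdot 50,\,2698\cdot 50)$ and using that $\mu>\tfrac12>\tfrac{2546}{2546+2698}\approx0.486$, the user's branch grows strictly faster in expectation, and because the relevant Beta law has huge shape parameters and is therefore extremely concentrated about its mean, the probability that the user's branch finalises first is overwhelming for every $\mu>1/2$.

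I expect the main obstacle to be making ingredient (c) fully rigorous. The clean $\mathrm{Beta}(3,3733)$ (resp.\ $\mathrm{Beta}(\cdot,\cdot)$) statements treat ``completing one epoch'' as a single Bernoulli event, whereas in reality epochs on the two branches advance in parallel, each requiring $50$ blocks, and the bounds in (a)--(b) are themselves one-sided (the user's chain may need fewer epochs, the conflicting chain more, and in Case II the exact recovery count depends on the realised deposit path and, under worst-case voting, on \Cref{lem:worst} rather than \Cref{lem:technical}). One therefore has to verify that every such simplification shifts the probability in the conservative direction — in particular that refining ``epoch'' to ``$50$-block run'' only concentrates the Beta law about the same mean — and that the stated constants $3733,2546,2698$, specific to the benchmark parametrisation, are indeed the correct monotone extrema of $\phi$ over the relevant ranges of $\alpha$. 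The remaining pieces, namely the numerical tail evaluations of the Beta CDF, are routine.
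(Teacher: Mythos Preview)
Your overall architecture---frame it as a race to first finalisation, use the liveness analysis (\Cref{lem:technical}, \Cref{thm:liveness}) to bound the epoch counts on each branch, then compare block-production speeds probabilistically---is exactly the paper's. The genuine difference is in how you execute ingredient~(c). The paper models block arrivals in \emph{continuous time}: on a branch with mining share $\mu$, the time to the $n$th block is Erlang$(n,\mu)$, and it works at \emph{epoch} granularity (3 versus 3733 units), so the Gamma--Beta identity $U/(U+V)\sim\mathrm{Beta}(a,b)$ for independent $U\sim\mathrm{Gamma}(a,1)$, $V\sim\mathrm{Gamma}(b,1)$ delivers $P(X<Y)=F_{\mathrm{Beta}(3,3733)}(\mu)$ directly, with an analogous Erlang comparison for Case~II. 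You instead use a \emph{discrete} Bernoulli race at \emph{block} granularity, obtain $\mathrm{Beta}(150,186650)$ via the Beta--Binomial identity, and then invoke a concentration comparison to drop back to $\mathrm{Beta}(3,3733)$. Both routes are valid; the paper's is shorter because the target distribution falls out in one step, whereas yours is arguably more honest about the fact that epochs consist of 50 blocks---but the price is the extra ``same mean, higher concentration $\Rightarrow$ larger CDF above the mean'' step, which is correct for Beta$(ka,kb)$ versus Beta$(a,b)$ but which you leave informal. If you want to avoid that step entirely, simply switch to the continuous-time Erlang model at epoch level, as the paper does.

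One notational caveat: your labelling $\phi(0.49)=2546$, $\phi(0.51)=2698$ follows the paper's \emph{definition} of $\phi$ (``$\alpha$-strong group stops voting''), but the theorem statement itself writes $\phi(0.51)=2546$ and $\phi(0.49)=2698$; the underlying numerics (0.51 remaining $\to$ 2546, 0.49 remaining $\to$ 2698, 1/3 remaining $\to$ 3733) are the same either way, so this is a harmless clash in the paper's own notation rather than an error on your part.
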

\begin{proof} 
\begin{description}[leftmargin=0cm]
\item[Case I.] In the first case, i.e., if more than $2/3$ of validators are voting on the same chain, then they will be able to finalise checkpoints  without any need for the liveness recovery mechanism. After the start of the partition, the conflicting chains will need to wait for less than one epoch before reaching the first potentially conflicting checkpoint. After another epoch, this checkpoint will be justified, and after another epoch it will be finalised, resulting in a total of 3 epochs. The amount of time that it takes for a corresponding number of blocks to be proposed depends on the speed of the block proposal mechanism, which in turn depends on the fraction $\mu$ of miners that are supporting the $\(\alpha,\mu\)$-chain. For the likelihood of a conflicting checkpoint, in the worst-case scenario, all remaining $1-\mu$ miners will be coordinated on the same (competing) chain as the remaining $1-\alpha$ validators. Thus, using the finalization bounds of \Cref{fig:finalization}, we have that the (at most) $1/3$ of validators on the competing chain will need (at least) $3733$ epochs to exceed the $2/3$ threshold (in terms of stake) on that chain and finalise checkpoints. Hence, that chain will be able to finalise first, if $\mu$ is so small that $3$ epochs of blocks on the $\(\alpha,\mu\)$-chain will take more time to be created than $3733$ epochs of blocks on the $\(1-\alpha,1-\mu\)$ chain. Since individual block creation times on a chain are exponentially distributed with mean equal to the fraction of the total mining power that is active on that chain, the time of the $n-$th block creation in a chain with $\mu$ mining power has the Erlang distribution with parameters $n$ and $1/\mu$. This follows from the well-known fact that the sum of $n$ exponential distributions with the same parameter $1/\mu$ has the Gamma distribution with parameters $\(n,\mu\)$, which is called the Erlang distribution when $n$ is an integer, cf. \cite{Gu13}. Thus, if we denote with $X$ the blocks on the $\(\alpha,\mu\)$-chain and $Y$ the blocks on the $\(1-\alpha,1-\mu\)$ chain, the probability that $X$ reaches 3 epochs of blocks after $Y$ reaches $3733$ of blocks is equal to
\begin{align*}
P\(X>Y\)&=P\(X\mu>Y\mu\pm Y\(1-\mu\)\)=P\(X\mu+Y\(1-\mu\)>Y\)=P\(\frac{Y\(1-\mu\)}{X\mu+Y\(1-\mu\)}<1-\mu\)=F_{B'}\(1-\mu\)
\end{align*}
where ${B'}$ has a Beta distribution with parameters $n_1=3733$ and $n_2=3$. This result is based on the well-known fact from (textbook) probability that if $X\sim \text{Erlang}\(n_1=3,\mu\)$ and $Y\sim \text{Erlang}\(n_2=3733,1-\mu\)$, then (i) $X/\mu\sim \text{Erlang}\(n_1,1\)$ and $Y/\(1-\mu\)\sim \text{Erlang}\(n_2,1\)$ and (ii) $Y/(X+Y)\sim \text{Beta}\(n_2,n_1\)$. Finally, (using properties of the Beta distribution), we can simplify the above equation to obtain that 
\[P\(X<Y\)=F_B\(\mu\), \hspace{15pt}\text{with } B\sim\text{Beta}\(n_1=3,n_2=3733\)\]
which is precisely the statement of Case I. 
\item[Case II.]  In this case, validators voting in the $\(\alpha,\mu\)$-chain will have to wait for at most $2546$ epochs, cf. \Cref{fig:finalization}, for their stake, $\alpha$, to account for $2/3$ of the total stake in that chain. Assuming that all other $1-\alpha$ validators are voting on a $\(1-\alpha,1-\mu\)$, then they will need at least $2698$ epochs to resume finalisation, cf. \Cref{fig:finalization}. With a similar argument as in Case I, the probability that finalisation in the $1-\mu$ chain will precede finalisation in the $\mu$ chain is less or equal to the probability that an Erlang random variable with parameters $n_1=2698,\mu_1=0.49$ will be less or equal than a random variable $n_2=2546, \mu_2=0.51$, which is negligible (Python calculation: 0E-537). \qedhere
\end{description}
\end{proof}

The bounds in all the cases treated in \Cref{prop:fork}, refer to the worst case scenario -- in terms of the likelihood that a safety fault will occur -- in which the network is perfectly split into two competing chains. In the case that $\alpha>0.5$ and $\mu<0.5$, then checkpoints on an $\(1-\alpha,1-\mu\)$-chain may be finalised first. However, as explained in \Cref{sub:finality_casper}, these cases refer to extreme network conditions and can be expected to rarely occur in practice. In case that they indeed occur, then either the network will permanently fork or one part will need to incur considerable economic losses to reunite. In such occasions, the conflicts are expected to be resolved by an off-chain governance mechanism.\par
\begin{remark}\label{rem:probabilities}
To visualize the magnitude of the probabilities in Case I of \Cref{prop:fork}, we provide the plot of the cumulative function of the Beta$\(3,3733\)$ distibution in \Cref{fig:beta}. For $\mu=0.003$ (or 3/1000), the probability that the $\(\alpha,\mu\)$-chain finalises first is almost equal to $1$ (precisely equal to $0.999$). For lower $\mu$, the probability that this chain finalises first remains the largest. More precisely, this will be the case as long as 
\[\frac{\mu}{1-\mu}<\frac{3}{3733} \implies \mu<8.036\times 10^{-4}\]
which shows that if at least $\mu_0\approx 8.036\times 10^{-4}$ miners are working on the chain of the $\alpha\ge2/3$ validators, then this chain is more likely than not to finalise checkpoints first. We note, that such scenarios are extreme and only of theoretical interest. In the case that only such a small fraction of miners is working on a chain, then the $\alpha>2/3$ validators will be able to notice that something is \qt{wrong} due to the extremely slow block creation times. Hence, such a scenario is expected to be resolved with some form of off-chain communication. In any case, these cases remain only relevant (if at all) for the hybrid PoW-PoS structure.
\end{remark}

\begin{figure*}[!htb]
\centering
\includegraphics[scale=0.5]{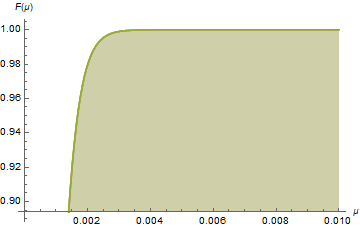}
\caption{Probability that the $\(\alpha,\mu\)$-chain with $\alpha>2/3$ validators will be able to finalize first. Even for values of $\mu$ as low as 0.003, this probability is almost equal to $1$. For any $\mu\ge 8.036\times 10^{-4}$, this probability is not lower than $0.5$.}
\label{fig:beta}
\end{figure*}

\begin{remark}[Intuition of \Cref{prop:fork}.] The argument in the proof of \Cref{prop:fork} is illustrated in \Cref{fig:safety}. After the initiation of the fork, validators who keep voting in the upper branch know that they will be able to start finalising first, since they form the majority at any point in time. Validators in the lower branch will also be able to finalise checkpoints, yet this will take considerably longer, see \Cref{fig:finalization}. In this case, conflicting checkpoints will be finalised and clients aware of either of the checkpoints will not be willing to revert them (under Casper's fork-choice rule). In other words, even if a double-finality event does take place, users are not forced to accept the claim that has more stake behind it; instead, users will be able to manually choose which fork to follow along, and are certainly able to simply choose \qt{the one that came first}. A successful attack in Casper looks more like a hard-fork than a reversion, and the user community around an on-chain asset is quite free to simply apply common sense to determine which fork was not an attack and actually represents the result of the transactions that were originally agreed upon as finalized. In any case, as shown in \Cref{prop:fork}, the time of a network partition that is required for the double-finalization event to occur, is a period of time which is of theoretical interest only. Finally, note that these bounds are conservative, i.e., they represent a worst case scenario, since they assume that all the adversarial stake is coordinated on the same chain (fork). In case, of $n>2$ conflicting forks, these times increase even more for any given percentage of benevolent validators that remain on the \qt{canonical} chain. 
\begin{figure*}[!hbt]
\centering
\includegraphics{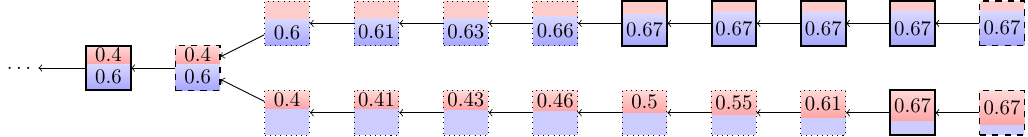}
\caption{Casper's safety property, cf. \Cref{prop:fork}: in case of a fork, the branch with the majority of validators (here upper branch) will resume finalising checkpoints first. Nodes do not revert finalised checkpoints, so for an honest node to accept a minority attacker's finalised block, it must have remained unaware of finalised blocks on the honest chain until the attacker was able to finalise, which for a $49\%$-strong attacker would take $2698 - 2546 = 152$ epochs or over 29 hours. After finalisation resumes, the ESF drops to $2$, which makes rapid changes in validators' deposits once again impossible, cf. \eqref{eq:rewardfactor}.}
\label{fig:safety}
\end{figure*}
\end{remark}

\subsection{Incentive Compatibility}\label{sec:incentivecomp}

In this section we will consider the system's \emph{incentive compatibility}, i.e., whether the reward scheme incentivises all types of nodes in the system to follow the protocol. We discuss the two main types of nodes, i.e., block proposers and validators, separately.

{\bf Block proposers:}
Currently, the incentives for block proposers on the PoW chain are similar to those in Bitcoin, for which it has been shown in \cite{eyal2018majority,sapirshtein2016optimal,kiayias2016blockchain} to be a Nash equilibrium to follow the protocol for anyone who controls less than roughly $\frac{1}{3}$ of the mining power. The checkpointing mechanism has a small impact on the profitability of selfish mining because it imposes limits on how many blocks can be reverted and when. After all, the new fork-choice rules prefers blocks that have the highest justified checkpoint, meaning that it is impossible for a selfish miner to convince other nodes to drop blocks that go beyond the last justified checkpoint. However, the epoch length (50 blocks) is likely to be too large for the impact to be considerable.

Block proposers also influence whether and when some Casper FFG protocol messages appear on the PoW chain (for {\tt vote} messages, the \qt{when} is very important because delaying them beyond the end of the epoch renders them invalid). Hence, the block proposers should have an incentive to include these messages. As mentioned in EIP 1011 \cite{ryan2018eip}, a miner who includes a vote message receives a fee equal to $\frac{1}{8}$ of the reward given to the validator (see also lines 297-301 of the Casper contract \cite{caspercontract}). Meanwhile, according to EIP 1011 \cite{ryan2018eip}, the contract calls do not consume gas, and therefore also do not contribute to the block's gas limit. Hence, block proposers are incentivised to include votes unless the computational cost incurred by the miner outweighs the reward for processing the vote message, which depends on the validator's deposit, and the protocol's total deposit.


{\bf Validators:}
The validators are the new class of nodes introduced by Casper FFG. Their most important action is to vote, so we will consider the impact on their deposits of producing a valid vote or not. In the following, we will consider the case of a validator $\vldr$ who controls a deposit share \mbox{$\vdsize \in (0,\frac{2}{3})$}. (If $\alpha > \frac{2}{3}$ then the validators have complete control over the protocol anyway.) The combined stake fraction of $\vldr$ and the other validators who vote is given by $\vtsize$ (the $1-\vtsize$ who do not vote could be doing so because they have unwillingly gone offline). We consider a single epoch, so we write $\rewardfactor$ instead of $\rewardfactor_i$. Both the voting and non-voting validators are hurt when $\vldr$ does not vote due to the dependence of the collective reward factor of \eqref{eq:rewardfactor} on the total fraction of voting validators. Furthermore, the impact depends on whether the epoch is justified, and whether $\vldr$ has an impact on that (if so, we call $\vldr$ a \qt{swing voter}). The impact of $\vldr$ not voting on all types of nodes and in all scenarios is summarised in Table~\ref{tab:losses}. It can be seen from the values in Table~\ref{tab:losses} that if $\vldr$ does not vote, then all validators lose a bit, but $\vldr$ loses by far the most. For example, if $\rho=10^{-6}$, $m=1$ and $\alpha=0.2$ then the losses for $\vldr$ are more than ten times as high as the losses for both the voting and non-voting validators. This by itself shows that $\vldr$ is incentivised to follow the protocol, if $\vldr$ is motivated by maximising her protocol rewards. Also, because all validators lose a bit, it does not pay in the short term for a validator to cause another validator to not vote, e.g., by censoring. 

\begin{table}[!htb]
\centering
\setlength{\tabcolsep}{18pt}
\renewcommand{\arraystretch}{1.2}
\begin{tabular}{@{}rccc@{}}
\toprule
&\multicolumn{3}{c}{\b Losses}\\
\cmidrule{2-4}
\b Justification & $\vldr$ & Voters & Non-voters\\
\cmidrule(r{2pt}){1-1}\cmidrule(l{2pt}){2-4}
Always & $\dfrac{\rewardfactor}{1+\rewardfactor}\(1+\dfrac{1}{2}\(\vtsize\rewardfactor + \vdsize\)\)$ & 
$\dfrac{1}{2} \vdsize \rewardfactor $ & $\dfrac{\rewardfactor}{1+\rewardfactor}\cdot\(\dfrac{1}{2} \vdsize\)$ \\[0.3cm]
Never & $\dfrac{\rewardfactor}{1+\rewardfactor}$ & 0 & 0 \\[0.3cm]
$\vldr$ is swing voter & $\dfrac{\rewardfactor}{1+\rewardfactor}\(1+\dfrac{1}{2}\(\vtsize\rewardfactor + \vdsize\)\)$ &$\dfrac{1}{2} \vtsize \rewardfactor$ & $\dfrac{\rewardfactor}{1+\rewardfactor}\cdot\(\dfrac{1}{2} \vtsize\)$\\
\bottomrule
\end{tabular}
\vspace{0.35cm}
\caption{Losses to all groups of validators if $\vldr$ abstains from voting. }
\vspace{-0.5cm}
\label{tab:losses}
\end{table}

For a validator $\vldr$ whose utilities go beyond the simple maximisation of their deposits, we can go a step further by providing upper bounds on the damage that $\vldr$ can do relative to $\vldr$'s own losses in a \emph{single epoch}. We consider two measures: the ratio between the \emph{absolute} losses of the other validators to $\vldr$'s, which we call the Griefing Factor (GF), and the ratio between the \emph{relative} losses, which we call the Proportional Loss Ratio (PLR)  \cite{buterin2018discouragement}.

In general, it holds that if the attacker controls $\vdsize$ of deposits, then 
$\text{GF} = \frac{1-\vdsize}{\vdsize} \text{PLR}$ \cite{buterin2018discouragement}.
Table~\ref{tb:abstain-gfs} displays the PLRs, obtained by dividing the relevant entries in Table~\ref{tab:losses} by each other and applying the equality above. All PLRs are bounded from above by $\frac{1}{2}$, and are much lower for small $\vdsize$. More concretely, if $\rewardfactor \approx 0$, then the PLRs are ${\frac{1}{2}\vdsize}/{(1+\frac{1}{2}\vdsize)}$, 0, and ${\frac{1}{2}\vtsize}/{(1+\frac{1}{2}\vtsize)}$ for voters in the three scenarios of Table~\ref{tb:abstain-gfs} respectively, so the PLR is always between $0$ (at $\vdsize\approx0$) and $\frac{1}{3}$ (at $\vdsize>\frac{1}{3}$ and $\vtsize=1$) whereas the GF is always between $1$ (at $\vdsize\approx0$) and~$\frac{1}{3}$ (at $\vdsize>\frac{1}{3}$ and $\vtsize=1$). This means that $\vldr$ will always incur a relative cost that is higher than the damage done to the others by abstaining. 

\begin{table}[!htb]
\centering
\setlength{\tabcolsep}{18pt}
\renewcommand{\arraystretch}{1.2}
\begin{tabular}{@{}rcccc@{}}
\toprule
& \multicolumn{2}{c}{\b Proportional Loss Ratio (PLR)} & \multicolumn{2}{c}{\b Griefing Factor (GF)}\\
\cmidrule(l{2pt}r{2pt}){2-3} \cmidrule(l{2pt}){4-5}
\b Justification & Voters & Non-voters & Voters & Non-voters \\ 
\cmidrule(r{2pt}){1-1}\cmidrule(l{2pt}r{2pt}){2-3} \cmidrule(l{2pt}){4-5}
Always & $\displaystyle\frac{\frac{1}{2} \vdsize (1+\rewardfactor)}{1+\frac{1}{2}(\vtsize\rewardfactor + \vdsize)} $ & $\displaystyle\frac{\frac{1}{2} \vdsize}{1+\frac{1}{2}(\vtsize\rewardfactor + \vdsize)} $ &
$\displaystyle\frac{\frac{1}{2} (\vtsize-\vdsize)(1+\rewardfactor)}{1+\frac{1}{2}(\vtsize\rewardfactor + \vdsize)} $ & $\displaystyle\frac{\frac{1}{2} (1-\vtsize)}{1+\frac{1}{2}(\vtsize\rewardfactor + \vdsize)} $ \\[0.3cm] 
Never & 0 & 0 & 0 & 0 \\[0.3cm]
$\vldr$ is swing voter & $\displaystyle\frac{\frac{1}{2} \vtsize \(1+\rewardfactor\)}{1+\frac{1}{2}\vtsize(1+\rewardfactor)} $ &  $\displaystyle\frac{\frac{1}{2} \vtsize}{1+\frac{1}{2}\vtsize(1+\rewardfactor)} $ &
$\displaystyle\frac{\frac{1}{2} \vtsize (\vtsize-\vdsize)(1+\rewardfactor)}{\vdsize(1+\frac{1}{2}\vtsize(1+\rewardfactor))} $ & $\displaystyle\frac{\frac{1}{2} \vtsize (1-\vtsize)}{\vdsize(1+\frac{1}{2}\vtsize(1+\rewardfactor))} $ \\
\bottomrule
\end{tabular}
\label{tab:griefingfactors}
\vspace{0.35cm}
\caption{PLRs and GFs if $\vldr$ abstains from voting.}
\label{tb:abstain-gfs}
\vspace{-0.5cm}
\end{table}

\section{Implementation}\label{sec:implementation}

In this section, we discuss implementation specifics of the Casper contract, in particular the financial cost in terms of rewards to validators in \Cref{sub:choice}, its impact on transaction throughput in \Cref{sec:messages}, and potential issues and limitations in \Cref{sub:limitations}.



\subsection{Impact of the Parameter Choice}
\label{sub:choice}
Finding appropriate values for each of $\baseinterest$, $\basepenalty$, and $\depsizefactor$ as discussed in Section~\ref{sub:rewards} involves a tradeoff. In general, to find appropriate values for $\baseinterest$, $\basepenalty$, and $\depsizefactor$, there are several factors to consider.\par
\noindent
Regarding $\baseinterest$:
\begin{itemize}[noitemsep]
\item Setting the reward too high makes the protocol expensive to operate.
\item Setting the reward too low means that fewer people will be willing to deposit, which in turn implies that the protocol is more centralized and less secure \cite{Oce20}.
\item Setting rewards too low also increases the possibility of \textit{discouragement attacks} \cite{buterin2018discouragement}, where an attacker either performs a censorship attacks or finds a way to cause high latency (e.g., by splitting the network), causing validators to lose money; the threat of this may encourage validators to leave or discourage them from joining.
\end{itemize}
Regarding $\basepenalty$:
\begin{itemize}[noitemsep]
\item Setting the penalty factor too high implies improved liveness, i.e., faster recovery from situations in which more than $1/3$ stop voting properly.
\item Setting the penalties too high, however, also implies higher risks for validators and bigger losses even if they accidentally are not able to stay online.
\end{itemize}
Regarding $\depsizefactor$:
\begin{itemize}[noitemsep]
\item Setting the deposit size dependence higher means that validators can make a larger profit by performing censorship or DoS attacks against other validators.
\item Setting the deposit size dependence too low means that the protocol does not automatically adjust the interest rate depending on how risky potential validators perceive depositing to be, as argued in \cite{choi2017casper}.
\item Users are looking for not just low issuance and high security, but also \textit{stability} of the level of issuance and security. Having rewards decrease as the total deposit size increases (i.e., setting $\depsizefactor$ high) accomplishes the former goal trivially, and the latter goal by ``trying harder'' to attract validators when the total deposit size is small. However, low values of $\depsizefactor$ mean that from the perspective of a single validator, the interest rate is stable (e.g., if you deposit when the interest rate is $5\%$, you know that this will not drop dramatically even if a large sum is deposited by other validators).
\end{itemize}

\subsubsection*{Funding}
Rewards are paid to validators by the protocol. As discussed in \cite{ryan2018eip}, the Casper contract was planned to receive an initial amount of 1.5M newly created ETH after a hard fork (coinciding with the changes to the fork-choice rule used by the clients). If 10M ETH were deposited, then this would keep the contract funded for roughly 2 years.\footnote{Meanwhile, the mining reward was planned to be reduced substantially from 3ETH per block to 0.6ETH per block \cite{ryan2018eip,ryan7}.} This amount of ETH is intentionally kept limited to maintain an informal (i.e., dependent on further hard forks) deadline for the transition to full PoS.\footnote{A similar decision was made for the PoW difficulty, which has a built-in \qt{difficulty bomb} as a deadline for the transition to partial or full PoS, see also \href{https://www.cryptocompare.com/coins/guides/what-is-the-ethereum-ice-age}{Cryptocompare.com}.} If the contract has insufficient funds, validators still earn interest (as their deposits are kept as contract variables), but are unable to withdraw their deposits.

\subsection{Overhead of the Hybrid Casper FFG Contract}
\label{sec:messages}

Based on the discussion of the previous section, it is now immediate that the calls to the Casper contract impact the throughput of the protocol because they use the same client bandwidth and processing power as regular transactions. To understand this extent of this impact, we will study the contract's consumption of gas (which, as discussed in Section~\ref{sub:powchain}, measures the computational load) relative to the total gas limit. The total block gas limit is variable and can be influenced by the miners, although since December 2017 it has been close to 8M gas (see \href{https://etherscan.io/chart/gaslimit}{Etherscan.io}). The estimated gas costs of the six main types of function calls in the contract are displayed in Table~\ref{tbl:functiongas}. Although the exact gas consumption of function calls depends on various external factors,  including the exact numerical values of the arguments, the functionality to produce estimates is built into the Vyper compiler.
\begin{table}[!ht]
\vspace{0cm}\centering
\begin{tabular}{lcc}
	\midrule
	Function & Estimated gas cost \\ \midrule
	{\tt initialize\_epoch} & 742\,393 \\
	{\tt deposit} & 831\,687 \\
	{\tt logout} & 131\,308 \\
	{\tt withdraw} & 224\,155 \\
	{\tt vote} & 532\,031 \\
	{\tt slash} & 280\,864 \\
	\midrule
\end{tabular} 
\vskip0.1cm
\caption{Gas costs for the six core functions of the Casper contract as estimated by the Vyper compiler.} \label{tbl:functiongas}
\end{table}
Per epoch, there is ideally one {\tt initialize\_epoch} call, and one {\tt vote} call per validator. The cost of the {\tt deposit}, {\tt logout} and {\tt withdraw} functions is assumed to be negligible, in part because of the minimum time period between depositing and withdrawing. We assume the same for {\tt slash} because of the high cost of violating a slashing condition. The load of the two other calls is unevenly distributed across the epoch: the {\tt initialize\_epoch} call will come early in the epoch, but most of the {\tt vote} calls are expected to arrive in the later part of the epoch when the probability of voting for a `losing' block in a temporary PoW chain fork is small enough. In the benchmark parametrisation, there are 50 blocks per epoch, and we assume that votes arrive in the final three quarters of the epoch, i.e., the last 37 blocks. The impact of the {\tt initialize\_epoch} call during the first 13 blocks is roughly equal to $742\textnormal{K} / (13 \cdot 8\textnormal{M}) \approx 0.7\%$, which is small. However, the impact of the votes during the last 37 blocks can be considerable. With 100 validators, the expected gas cost per epoch is roughly equal to $100 \cdot 532\textnormal{K} / (37 \cdot 8\textnormal{M}) \approx 18\%$. This confirms similar observations by Ethereum researchers that, even under proper protocol updates, no more than 592 (or even 400) validators could be supported \cite{ryan2018gas}. The implementation of Casper FFG as a smart contract is illustrated in \Cref{fig:singlechain}.
\begin{figure}[!htb]
\centering
\includegraphics{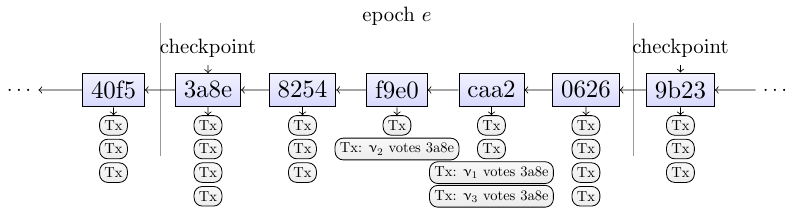}
\caption{Original set-up of Casper FFG: protocol messages are transactions to the Casper contract that appear on the main chain. In the above figure, the epoch length is 5, and three validators --- $\vldr_1$, $\vldr_2$, and $\vldr_3$ --- broadcast vote messages. Assuming that they control at least $\frac{2}{3}$ of the stake, epoch $\epoch$ is justified after the 4th block of $\epoch$, and if epoch $\epoch-1$ was also justified, then $\epoch-1$ is finalised.}
\label{fig:singlechain}
\end{figure}
Several approaches can be taken to limit the number of participating validators. The intended approach by Ethereum was to impose a fixed minimum deposit size of 1500 ETH. Alternative approaches would be to not accept new deposits beyond a hard limit of $N$ validators, to only accept votes from the $N$ validators with the highest deposit size, or to dynamically adjust the minimum deposit size based on the number of validators. Accurate predictions of the impact of the minimum deposit on the number of validators require economic modelling that is outside the scope of this paper. As for other PoS-based blockchain platforms: in EOS, 21 delegates \cite{EOS} chosen by the stakeholders control the consensus algorithm, whereas Cardano aims for 100 stake pools \cite{Cardano,brunjes2018reward}.

\subsubsection*{Off-Chain Messages}\label{sub:messages}
 Another approach to mitigate the network load is to move hybrid Casper FFG messages onto a separate chain. As a result, two interdependent blockchains operate simultaneously: the traditional PoW chain, and a side chain called the \emph{beacon chain}.\footnote{The beacon chain is named for its originally envisaged role in producing a random beacon \cite{drake2018which,edgington2018}.} The core protocol messages ({\tt vote} and {\tt slash}) are then moved to the latter, and the evolution of the rewards is processed internally by clients. A contract on the main chain is still created to handle {\tt deposit}, {\tt logout}, and {\tt withdraw} messages, and to process exchanges from ETH to/from the reward variables on the beacon chain.  The {\tt initialize\_epoch} calls are no longer necessary as clients process the epoch transitions internally.\par
The advantages of this approach are the possibility of message processing optimisations (e.g., bit masks and signature aggregation \cite{buterin2018casperffg,coredevs40meeting}, or the parallel processing of {\tt vote} messages which was found to be challenging in the contract set-up  \cite{Ch18}), and facilitation of a transition to a sharded\footnote{In a sharded blockchain, the network load is divided amongst several subchains that work in parallel.} blockchain, by serving as the central chain connecting the various side chains. The disadvantage is that substantial modifications to the clients will be required. The block proposal/consensus mechanism on the beacon chain is still under active development -- it is envisaged to use full PoS (with the same validator set as Casper FFG) in its final iteration. Given its long-term benefits, the dual-chain approach has been chosen as the way forward for Ethereum \cite{edgington2018}.

\subsection{Other Issues \& Limitations}
\label{sub:limitations}
We conclude with remarks of a general nature and issues that we detected from our study and the implementation of the Casper FFG contract. First, the \qt{finder's fee}, cf. \Cref{sub:scheme}, for detecting a violator of slashing conditions may create conflicting incentives and competition between validators.  Second, in the case that the network experiences a large partition or fork, cf. \Cref{fig:safety}, honest validators who have voted and finalised on the \emph{non-canonical} chain will sustain heavy losses to return to the main chain. 
Third, to focus on validators' mechanics, we ignored potential collusions between validators and PoW miners. This point is not relevant in a pure PoS implementation, as currently put forward, and would have shifted the present analysis away from Casper's properties of interest. The way that new (entering) nodes will choose between conflicting checkpoints depends on the choice of bootstrapping nodes by the client, and is therefore a question of proper client implementation. Finally, despite increasing security, the checkpointing mechanism does not reduce confirmation times (2 epochs = 100 blocks). Instead, the full benefits of Casper in terms of block-confirmation times are expected to be realised in a pure PoS implementation of the Ethereum blockchain. 

%
%



\section{Conclusions}\label{sec:conclusions}

In this paper, we analysed the Casper FFG contract that was evaluated on a dedicated Ethereum testnet. We described its core mechanism and showed that its incentives scheme ensures liveness whilst providing security against the finalisation of conflicting histories, i.e., checkpoints. As a finality protocol that can be overlaid on both PoW and PoS blockchains, hybrid Casper FFG can be of interest to a broad audience within the blockchain ecosystem. Our findings on liveness, safety, incentive compatibility, and implementation remain particularly relevant for Ethereum's transition to a sharded design in which the Casper FFG philosophy is carried over. \par
Although the focus of this paper was on the incentives scheme implemented in the Casper contract, an interesting direction for future work would be to explore different schemes, in particular alternatives to equations \eqref{eq:rewardfactor}, \eqref{eq:collective}, and \eqref{eq:interest}. All schemes will involve a trade-off: although heavy penalties for offline validators improve liveness and safety, they also negatively affect honest participants who suffer from unavoidable network or computer failures. Also, the impact of dynamic validator sets are an interesting research direction. 

\section*{Acknowledgements}
The authors would like to thank Pieter Hartel, Chih-Cheng Liang, Ivan Homoliak, and Vi for their helpful comments.

\bibliographystyle{abbrv}
\bibliography{casper_bib}
\end{document}